\newcommand{\mgeq}{\succeq}
\newcommand{\Lone}{{{L}_1}}
\newcommand{\Ltwo}{{{L}_2}}
\let\forallalt\forall
\renewcommand{\forall}{\;\forallalt\;}
\let\refalt\ref
\renewcommand{\ref}[1]{(\refalt{#1})}
\newcommand{\alp}{\ensuremath{\alpha}}
\newcommand{\lam}{\ensuremath{\lambda}}
\newcommand{\vlam}{\ensuremath{\boldsymbol{ \lam}}}
\newcommand{\va}{{\ensuremath{\mathbf a}}}
\newcommand{\vb}{{\ensuremath{\mathbf b}}}
\newcommand{\vc}{{\ensuremath{\mathbf c}}}
\newcommand{\vd}{{\ensuremath{\mathbf d}}}
\newcommand{\vh}{{\ensuremath{\mathbf h}}}                         
\newcommand{\vn}{{\ensuremath{\mathbf n}}}                         
\newcommand{\vx}{{\ensuremath{\mathbf x}}}
\newcommand{\vxone}{{\ensuremath{\mathbf x}_1}}
\newcommand{\vxtwo}{{\ensuremath{\mathbf x}_2}}
\newcommand{\vtxone}{{\ensuremath{\tilde{\mathbf x}_1}}}
\newcommand{\vtxtwo}{{\ensuremath{\tilde{\mathbf x}_2}}}
\newcommand{\vy}{{\ensuremath{\mathbf y}}}
\newcommand{\vz}{{\ensuremath{\mathbf z}}}
\newcommand{\Rvx}{{\ensuremath{\mathbf x}^{-}}}
\newcommand{\Rvxj}{{\ensuremath{\mathbf x}^{-}_j}}
\newcommand{\Rvxone}{\ensuremath{{\mathbf x}_1^{-}}}
\newcommand{\Rvxtwo}{{\ensuremath{\mathbf x}_2^{-}}}
\newcommand{\vta}{\ensuremath{\tilde{\mathbf a}}}
\newcommand{\vtx}{\ensuremath{\tilde{\mathbf x}}}
\newcommand{\uA}{{\ensuremath{\mathrm A}}}
\newcommand{\uB}{{\ensuremath{\mathrm B}}}
\newcommand{\uC}{{\ensuremath{\mathrm C}}}
\newcommand{\uI}{{\ensuremath{\mathrm I}}}
\newcommand{\uX}{{\ensuremath{\mathrm X}}}
\newcommand{\uXone}{{\ensuremath{\mathrm X}_1}}
\newcommand{\uXtwo}{{\ensuremath{\mathrm X}_2}}
\newcommand{\tuX}{{\ensuremath{\tilde{\mathrm X}}}}
\newcommand{\uY}{{\ensuremath{\mathrm Y}}}
\newcommand{\uR}{{\ensuremath{\mathrm R}}}
\newcommand{\uH}{{\ensuremath{\mathrm H}}}
\newcommand{\uS}{{\ensuremath{\mathrm S}}}
\newcommand{\uG}{{\ensuremath{\mathrm G}}}
\newcommand{\tuG}{{\ensuremath{\tilde{\mathrm G}}}}
\newcommand{\tuI}{{\ensuremath{\tilde{\mathrm I}}}}
\newcommand{\tuR}{{\ensuremath{\tilde{\mathrm R}}}}
\newcommand{\tuS}{{\ensuremath{\tilde{\mathrm S}}}}
\newcommand{\uN}{{\ensuremath{\mathrm N}}}
\newcommand{\uT}{{\ensuremath{\mathrm T}}}
\newcommand{\uP}{{\ensuremath{\mathrm P}}}
\newcommand{\uQ}{{\ensuremath{\mathrm Q}}}
\newcommand{\vA}{{\ensuremath{\mathbf A}}}
\newcommand{\vC}{{\ensuremath{\mathbf C}}}
\newcommand{\vH}{\ensuremath{\mathbf H }}                         
\newcommand{\vR}{\ensuremath{\mathbf R}}
\newcommand{\vS}{\ensuremath{\mathbf S }}                         
\newcommand{\vT}{\ensuremath{\mathbf T}}
\newcommand{\vW}{{\ensuremath{\mathbf W}}}
\newcommand{\vX}{{\ensuremath{\mathbf X}}}
\newcommand{\zero}{{\ensuremath{\mathbf 0}}}
\newcommand{\Torus}{\ensuremath{\mathbb{T}} }
\newcommand{\Alin}{{\ensuremath{\mathcal{A}}}}
\newcommand{\tAlin}{{\ensuremath{\tilde{\mathcal{A}}}}}
\newcommand{\C}{{\ensuremath{\mathbb C}}}
\newcommand{\R}{{\ensuremath{\mathbb R}}}
\newcommand{\K}{{\ensuremath{\mathbb F}}} 
\newcommand{\Z}{{\ensuremath{\mathbb Z}}}
\newcommand{\Zform}{{\ensuremath{\mathcal Z}}}
\newcommand{\id}{{\ensuremath{\mathbb 1}}}
\newcommand{\RA}{\ensuremath{\Rightarrow} }
\newcommand{\LRA}{\ensuremath{\Leftrightarrow} }
\newcommand{\Pro}{\prod}
\newcommand{\Proj}{{\mathbf \Pi}}
\newcommand{\ProjNL}{\ensuremath{{\mathbf \Pi}_{N,L}}}
\newcommand{\ProjNLj}{\ensuremath{{\mathbf \Pi}_{N,L_j}}}
\newcommand{\skprod}[2]{\ensuremath{ \left\langle #1,#2 \right\rangle }}
\definecolor{gray}{rgb}{0.3,0.3,0.3}
\newcommand{\thmref}[1]{Theorem~\ref{#1}}     
\newcommand{\lemref}[1]{Lemma~\ref{#1}}       
\newcommand{\enuref}[1]{\ref{#1}}             
\newcommand{\appref}[1]{Appendix~\ref{#1}}
\newcommand{\figref}[1]{Figure~\ref{#1}}
\newcommand{\noi}{\noindent}
\newtheorem{definition}{Definition}         
\newtheorem{definition}[theorem]{Definition}   
\newtheorem{theorem}{Theorem}         
\newtheorem{lemma}{Lemma}
\newenvironment{remark}{\par\vspace{1.5ex}\noindent{\em Remark\/}.}{\par\vspace{1.5ex}}
\DeclareMathOperator{\spann}{span}
\DeclareMathOperator{\range}{range}
\DeclareMathOperator{\find}{find}
\newcommand\rank{\operatorname{rank}}
\newcommand\tr{\operatorname{tr}} 
\newcommand\kernel{\operatorname{kern}}
\newcommand{\argmin}[1]{\underset{#1}{\operatorname{argmin}}}
\newcommand{\Norm}[1]{\ensuremath{ \left\|#1\right\| }}
\newcommand{\Expect}[1]{{\ensuremath{\mathbb E}[#1]}}
\newcommand{\cc}[1]{{\ensuremath{\overline{#1}}}} 
  \newcommand{\set}[2]{\ensuremath{%
  \setbox0=\hbox{\ensuremath{#2}}
  \dimen@\ht0
  \advance\dimen@ by \dp0
  \left\{\left.#1\rule[-\dp0]{0pt}{\dimen@}\;\right|\;#2\right\} }}
\newcommand{\Rmap}{\ensuremath{\mathfrak R}}
\renewcommand{\argmin}{\operatornamewithlimits{argmin}}
\newcommand{\namen}[1]{{\textsc{#1}}}           
\renewcommand\paragraph{\@startsection
{paragraph}{4}{\z@}{-3.5ex plus-1ex minus-.2ex}%
{1.3ex plus.2ex}{\normalfont\itshape}}
\renewcommand{\Re}{\ensuremath{\operatorname{Re}}}
\renewcommand{\Im}{\ensuremath{\operatorname{Im}}}
\newcommand{\textIm}{\ensuremath{\Im}} 
\newcommand{\textRe}{\ensuremath{\Re}} 
  \newcommand{\seefor}[1]{\!}
  \newcommand{\seeintern}[1]{\!}
\renewcommand{\alp}{\phi}
  \newcommand{\tikzmark}[1]{\tikz[overlay,remember picture,baseline=(#1.base)] \node (#1) {\strut};}
  \newcommand{\tikzmarkright}[1]{\tikz[overlay,remember picture,baseline=(#1.base)] \node [right=4pt] (#1)   {\strut};}
  \newcommand{\tzm}[1]{\tikzmark{#1}}
  \newcommand{\tzmr}[1]{\tikzmarkright{#1}}
\begin{document}
%
%
\title{Blind Deconvolution with Additional Autocorrelations via Convex Programs}

\author{
\IEEEauthorblockN{Philipp Walk\IEEEauthorrefmark{1}, Peter Jung\IEEEauthorrefmark{2}, Götz E. Pfander\IEEEauthorrefmark{3} 
     and Babak Hassibi\IEEEauthorrefmark{1}\\}
\IEEEauthorblockA{\IEEEauthorrefmark{1}Department of Electrical Engineering, Caltech, Pasadena, CA 91125\\
Email: \{pwalk,hassibi\}@caltech.edu \\}
\IEEEauthorblockA{\IEEEauthorrefmark{2}Communications \& Information Theory, Technical University Berlin, 10587 Berlin\\
Email: peter.jung@tu-berlin.de\\}
\IEEEauthorblockA{\IEEEauthorrefmark{3}Philipps-University Marburg, Mathematics \& Computer Science\\
Email: pfander@mathematik.uni-marburg.de}
}

\maketitle

\if0 
\fi
\begin{abstract}
  In this work we characterize all ambiguities of the linear (aperiodic) one-dimensional convolution on two fixed
  finite-dimensional complex vector spaces. It will be shown that the convolution ambiguities can be mapped one-to-one
  to factorization ambiguities in the $z-$domain, which are generated by swapping the zeros of the input signals. We use
  this polynomial description to show a deterministic version of a recently introduced masked Fourier phase retrieval
  design.  In the noise-free case a (convex) semidefinite program can be used to recover exactly the input signals if
  they share no common factors (zeros).  Then, we reformulate the problem as deterministic blind deconvolution with
  prior knowledge of the autocorrelations. Numerically simulations show that our approach is also robust against
  additive noise. 
\end{abstract}

\section{Introduction}

Blind deconvolution problems occur in many signal processing applications, as in digital communication over wire or
wireless channels. Here, the channel (system), usually assumed to be linear time invariant, has to be identified or
estimated at the receiver. Once, the channel can be probed sufficiently often and the channel parameter stay constant
over a longer period, pilot signals can be used for this purpose. However, in some cases one also has to estimate or
equalize the channel blindly.  Blind channel equalization and estimation methods were already developed in the $90$ties,
see for example in \cite{TXK91,DKAJ91,TXHK95} for the case where the receiver has statistical channel knowledge, for
example second order or higher moments.  If no statistical knowledge of the data and the channel is available, for
example, for fast fading channels, one can still ask under which conditions on the data and the channel a blind channel
identification is possible. Necessary and sufficient conditions in a multi-channel setup where first derived in
\cite{XLTK95,GN95} and continuously further developed, see e.g.  \cite{A-MQH97} for a nice summary. All these techniques
are of iterative nature which are therefor difficult to analyze. Most of the algorithms often suffer from instabilities
in the presence of noise and overall the performance is inadequate for many applications.  To overcome these
difficulties, we will propose in this work a convex program for simultaneous reconstruction of the channel and data
signal. We show that this program is always successful in the noiseless setting and we numerically demonstrate its
stability under noise. The blind reconstruction can hereby be re-casted as a phase retrieval problem if we have
additional knowledge of the autocorrelation of the data and the channel at the receiver, which was shown  by Jaganathan
and one of the authors in \cite{JH16}.  The uniqueness of the phase retrieval problem can then be shown by constructing
an explicit dual certificate in the noise free case by translating the ideas of \cite{JH16} to a purely deterministic
setting. We show that the convex program derived in \cite{JH16} holds indeed for every signal and channel of fixed
dimensions as long as the corresponding $z-$transforms have no common zeros, which is known to be a necessary condition
for blind deconvolution \cite{XLTK95}. Before we propose the new blind deconvolution setup we will define and analyse
all ambiguities of (linear) convolutions in finite dimensions.

\section{Ambiguities of Convolution}

The convolution defines a product and it is therefore obvious that this comes with factorization ambiguities. But, so
far, the authors couldn't find a mathematical rigorous and complete characterization and definition of all convolution
ambiguities in the literature.  Even in the case of autocorrelations, as investigated in phase retrieval problems, the
definition of ambiguities seems at least not consistent, see for example \cite{Hur89, BS79} or even a recent work
\cite{BP15}. To obtain well-posed blind deconvolution problems of finite dimensional vectors, we have to precisely
define all ambiguities of convolutions over the field $\C$ in the finite dimensions $\Lone$ respectively $\Ltwo$.  Only
if we  exclude all non-trivial ambiguities we obtain identifiability of the inputs $(\vxone,\vxtwo)\in\C^{\Lone}\times
\C^{\Ltwo}$ from their \emph{aperiodic or linear convolution product} $\vy\in\C^{L_1+L_2-1}$, given component-wise for
$k\in\{0,1,\dots, \Lone\!+\!\Ltwo\!-\!2\}=:[\Lone\!+\!\Ltwo\!-\!1]$ as 
\begin{align}
 y_k= (\vxone*\vxtwo)_k := \sum_{l=0}^{\min\{\Lone\!-\!1,k\}} x_{1,l} x_{2,k-l}.\label{eq:convtime}
\end{align}
A first analytic characterization of such identifiable classes, also for general bilinear maps, in the time domain
$\C^{\Lone}\times \C^{\Ltwo}$ was obtained in \cite{WJ12b,CM13a,CM15a}. 
However, before we define the convolution ambiguities, we will define first the scaling ambiguity in $\C^{\Lone}\times
\C^{\Ltwo}$ which is the intrinsic ambiguity of scalar multiplication $m\colon\C\times \C\to \C$ mapping any pair
$(a,b)$ to the product $m(a,b):=ab$.  Obviously, this becomes the only ambiguity if any bilinear map, as the
convolution, is defined for trivial dimensions $\Lone=\Ltwo=1$.
%
%
We have therefore the following definition.  
%
\begin{definition}[Scaling Ambiguities]
  Let $L_1,L_2$ be positive integers. Then the scalar multiplication $m$ in $\C$ induces a \emph{scaling equivalence relation} on
  $\C^{\Lone}\times\C^{\Ltwo}$ defined by
  \begin{align}
    (\vxone, \vxtwo) \sim_{m} (\vtxone,\vtxtwo) \LRA \exists\lam\in\C\colon \vtxone=\lam\vxone,
    \vtxtwo=\lam^{-1}\vxtwo.\label{eq:linconv}
  \end{align}
  We call $[(\vxone,\vxtwo)]_m:= \set{(\vtxone,\vtxtwo)}{(\vxone,\vxtwo)\sim_m (\vtxone,\vtxtwo)}$ the \emph{scaling
  equivalence class} of $(\vxone,\vxtwo)$.
\end{definition}
%
\begin{figure}[t]
  \centering
  \xymatrixcolsep{1pc}
  \xymatrixrowsep{0.4pc}
  $\xymatrix{ 
    \text{Root-Domain} & &z\text{-Domain} & & \! \! \! \text{Time-Domain}\\
    \C^{\Lone-1}\times\C^{\Ltwo-1} \ar[rr]& & 
    \ar[ll] \C_{\Lone}[z]\times \C_{\Ltwo}[z] \ar[dd]_{\cdot}
    \ar[rr]&&  \ar[ll] \C^{\Lone} \times \C^{\Ltwo} \ar[dd]^{*}  \\ \\
    \C^{\Lone+\Ltwo-2} \ar[uu]^{\Pi}\ar[rr]  &  & \ar[ll]_{\Rmap\quad\quad }\C_{\Lone +\Ltwo-2}[z]  \ar[rr]^{\quad\Zform} &&
   \ar[ll]  \C^{\Lone+\Ltwo-1}
     }$
     \caption{Zero/root representation of the convolution}\label{fig:zerorepconv}
\end{figure}
\begin{remark}
  The scaling ambiguity can be easily generalized over any field $\K$. 
\end{remark}

We identify $\vx\in\C^N$ with its \emph{one-sided} or \emph{unilateral $z-$transform} or \emph{transfer function}, given by
\begin{align}
  \uX(z)=(\Zform \vx)(z):=\sum_{k=0}^{N-1} x_k z^{-k} = \sum_{k=F}^D x_k z^{-k},\label{eq:zform}
\end{align}
where $D$ denotes the largest (degree of $\uX$) and $F$ the smallest non-zero coefficient index of $\vx$. The transfer
function in \eqref{eq:zform} is also called and \emph{FIR filter} or \emph{all-zero filter}, i.e., the only pole is
attained at $z=0$, and if the first coefficient is not vanishing all zeros are finite (lying in a circle of finite
radius), see \figref{fig:x1x2ambi} and \figref{fig:ambiguityautcorN}. Here, $\uX\in\C[z]$ defines a polynomial over
$z^{-1}$ and therefor we will not distinguish in the sequel between polynomial and unilateral $z-$transform. The set of
all finite degree polynomials $\C[z]$ defines with the polynomial multiplication $\cdot$ (algebraic convolution) 
\begin{align}
 \uY(z)= \uX_1(z)\cdot \uX_2(z):= \sum_{l=0}^{\Lone-1} x_{1,l} z^{-l}\cdot \sum_{l=0}^{\Ltwo-1} x_{2,l}
  z^{-l}=\sum_{k=0}^{\Lone+\Ltwo-2} \left(\sum_{l=0}^{\min\{\Lone\!-\!1,k\}} x_{1,l} x_{2,k-l}\right) z^{-k}\label{eq:zconv}
\end{align}
a ring, called the \emph{polynomial ring}.  Since $\C$ is an algebraically closed field we have, up to a unit $u\in\C$,
a \emph{unique factorization} of $\uX\in\C[z]$ of degree $D$ in primes $\uP_k(z):=z^{-1}-\zeta_k^{-1}$ (irreducible polynomials
of degree one), i.e., 
\begin{align}
  \uX(z) = x_{F} \Pro_{k=1}^{D} (z^{-1}-\zeta^{-1}_k),\label{eq:uniquefacX}
\end{align}
is determined by the $D$ \emph{zeros}  $\zeta_k$ of $\uX$ and the unit $x_F$.  Hence, for finite-length sequences
(vectors), the linear convolution \eqref{eq:linconv} can be represented with the $z-$transform $\Zform$ one-to-one in
the $z-$domain as the polynomial multiplication \eqref{eq:zconv}, see for example the classical text books \cite{Lan02}
or \cite{OSB99}. This allows us to define the set of all convolution ambiguities precisely in terms of their
factorization ambiguities in the $z-$domain, see \figref{fig:zerorepconv}, where we denoted by $\C_L[z]$ polynomials of
degree $< L$. Note, the convolution ambiguities are described  in the root-domain therefor by a partitioning map $\Pi$ of the
roots (zeros). This brings us to the following definition.
\if0 
every polynomial $\uX(z)\in\C[z]$ of degree $D$ in $D$ primes, given as the irreducible polynomials $z^{-1}-\zeta^{-1}$
of degree one, up to a unit $u\in\C$. If we identify $\vx\in\C^L$ with the \emph{one-sided} or \emph{unilateral
$z-$transform} or \emph{transfer function} of $\vx\in\C^N$, given by
\begin{align}
  \uX(z)=(\Zform \vx)(z):=\sum_{k=0}^{L-1} x_k z^{-k} = \sum_{k=F}^D x_k z^{-k},
\end{align}
where $D$ denotes the largest and $F$ the smallest non-zero coefficient index of $\vx$. Then, we have the unique factorization 
\begin{align}
  \uX(z) = x_{F} \Pro_{k=1}^{D} (z^{-1}-\zeta^{-1}_k),
\end{align}
where $\zeta_k$ are the roots or zeros of $\uX$. Note, this holds since the ring $\C$ is a unique and algebraically
closed factorization domain, i.e., the irreducible polynomials have degree one and are given up to a unit by
$(z^{-1}-\zeta_k^{-1})$. The convolution $\vy=\vxone*\vxtwo$ of $\vxone\in\C^{\Lone}$ and $\vxtwo\in\C^{\Ltwo}$ is given
as the inverse $z-$transform $\Zform^{-1}$  of the product $\uX_1\cdot\uX_2$ of their $z-$transforms.  Therefore we can
define the set of all convolution ambiguities precisely in terms of their factorization ambiguities in the $z-$domain,
see \figref{fig:zerorepconv}.  \fi
%

%
\begin{definition}[Convolution Ambiguities]
  Let $L_1,L_2$ be positive integers. Then the linear convolution $*\colon\C^{\Lone}\times \C^{\Ltwo}\to \C^{\Lone+\Ltwo-1}$
  defines on the domain $\C^{\Lone}\times \C^{\Ltwo}$ a equivalence relation $\sim_*$ given by
  \begin{align}
    (\vxone, \vxtwo) \sim_* (\vtxone,\vtxtwo) :\LRA \vtxone*\vtxtwo=\vxone*\vxtwo.
  \end{align}
  For each $(\vxone,\vxtwo)$ we denote by $\uX_1(z)$ and $\uX_2(z)$ its $z-$transforms of degree $D_1$ respectively
  $D_2$.  Moreover we denote by $x_{F_1}$ respectively $x_{F_2}$ the first non-zero coefficients of $\vxone$
  respectively $\vxtwo$ and by $\{\zeta_k\}_{k=1}^{D_1+D_2}\subset\C\cup\{\infty\}$ the zeros of the product $\uX_1
  \uX_2$. Then the pair
  \begin{align}
    (\vtxone,\vtxtwo):=(\Zform^{-1}(\tuX_1), \Zform^{-1}(\tuX_2)), 
  \end{align}
  with
  \begin{align*}
    \tuX_1=x_{F_1}x_{F_2}\Pro_{k\in P} (z^{-1}-\zeta_k^{-1}) 
    \quad\text{and}\quad \tuX_2=\Pro_{k\in [D]\setminus P}  (z^{-1}-\zeta_k^{-1}),
  \end{align*}
  where $P$ is some subset of $[D]$ such that  $D-L_2+1\leq |P| \leq L_1-1$, is called a \emph{left-scaled non-trivial
  convolution ambiguity} of $(\vxone,\vxtwo)$.  The set of all \emph{convolution ambiguities} of $(\vxone,\vxtwo)$ is
  then the equivalence class defined by the finite union of the scaling equivalence classes of all left-scaled
  non-trivial convolution ambiguities given by
  \begin{align}
    [(\vxone,\vxtwo)]_*:= \bigcup_{n} [(\vtxone^{(n)},\vtxtwo^{(n)})]_m.
  \end{align}
  We will call $(\vtxone,\vtxtwo)\in [(\vxone,\vxtwo)]_*$ a \emph{scaling convolution ambiguity} or \emph{trivial
  convolution ambiguity} of $(\vxone,\vxtwo)$ if $(\vtxone,\vtxtwo)\in[(\vxone,\vxtwo)]_m$ and in all other cases a \emph{non-trivial
  convolution ambiguity} of $(\vxone,\vxtwo)$.
\end{definition}
%
\begin{remark}
  The naming \emph{trivial} and \emph{non-trivial} is borrowed from the polynomial language, where a trivial polynomial
  is a polynomial of degree zero, represented by a scalar (unit),  and a non-trivial polynomial is given by a polynomial
  of degree greater than zero.  Hence, the factorization ambiguity of a trivial polynomial corresponds to the scaling or
  trivial convolution ambiguity and the factorization ambiguity of a non-trivial polynomial corresponds to the
  non-trivial convolution ambiguity.  We want to emphasize at this point, that the $z-$domain (polynomial) picture is
  known and used for almost a century in the engineering, control and signal processing community. Hence this
  factorization of convolutions is certainly not surprising, but by the best knowledge of the authors, not rigorous
  defined in the literature.  For a factorization of the auto-correlation in the $z-$domain see for example
  \cite{HLO80}, \cite[Sec.3.]{BS79}, and the summarizing text book about phase retrieval  \cite{Hur89}. A complete
  one-dimensional ambiguity analysis for the auto-correlation problem was recently obtained by \cite{BP15}.  A very
  similar, but not a full characterization of the ambiguities of the phase retrieval problem was given by \cite{JOH13}.
  Both works extend the results in \cite{LV11}.  Let us mentioned at last, that the non-trivial ambiguities for
  multi-dimensional convolutions are almost not existence, by the observation that multivariate polynomials, chosen
  randomly, are irreducible with probability one, i.e.,  a factorization ambiguity is then not possible, see for example
  \cite{Lan02}.  This is in contrast to a random chosen univariate polynomial, which has full degree and no
  multiplicities of the zeros (factors), and obtains therefore the maximal amount of non-trivial ambiguities, see upper
  bound in \eqref{eq:boundambi}.
\end{remark}

\paragraph{On the Combinatorics of Ambiguities}

The determination of the amount $M$ of left-scaled non-trivial convolution ambiguities of some
$(\vxone,\vxtwo)\in\C^{L_1\times L_2}$ is a
hard combinatorial problem. The reason are the multiplicity of the zeros of $\uX_1\uX_2$. If a zero $\zeta_k$ has multiplicity
$m_k\geq1$, then we have $m_k+1$ possible assignments of the $m_k$ equal $\zeta_k$ to $\uX_1$, i.e., we can choose one
of the factors $\{1,1-\zeta_k, (1-\zeta_k)^2, \dots, (1-\zeta_k)^{m_k}\}$ as long as $m_k\leq \Lone-1$. Hence, if all
zeros are equal, we only have $\min\{D+1,\Lone,\Ltwo\}$ different choices to assign zeros for $\uX_1$. Contrary, if all
zeros are distinct, then we end up with $2^D$ different zero assignments for $\uX_1$, which yields to 
\begin{align}
  \min\{D+1,\Lone,\Ltwo\}\leq M&   \leq 2^D.\label{eq:boundambi}
\end{align}
In \figref{fig:x1x2ambi} we plotted for arbitrary polynomials $\uX_1$ and $\uX_2$ their zeros in the $z-$domain, where
we assumed one common zero. Since the polynomials have finite degree, the only pole is located at the origin. Every
permutation of the zeros yields then to an ambiguity. 
\if0
\begin{figure}[t]
  \begin{subfigure}[h]{0.48\textwidth}
    \centering
    \def\svgwidth{0.96\columnwidth}
    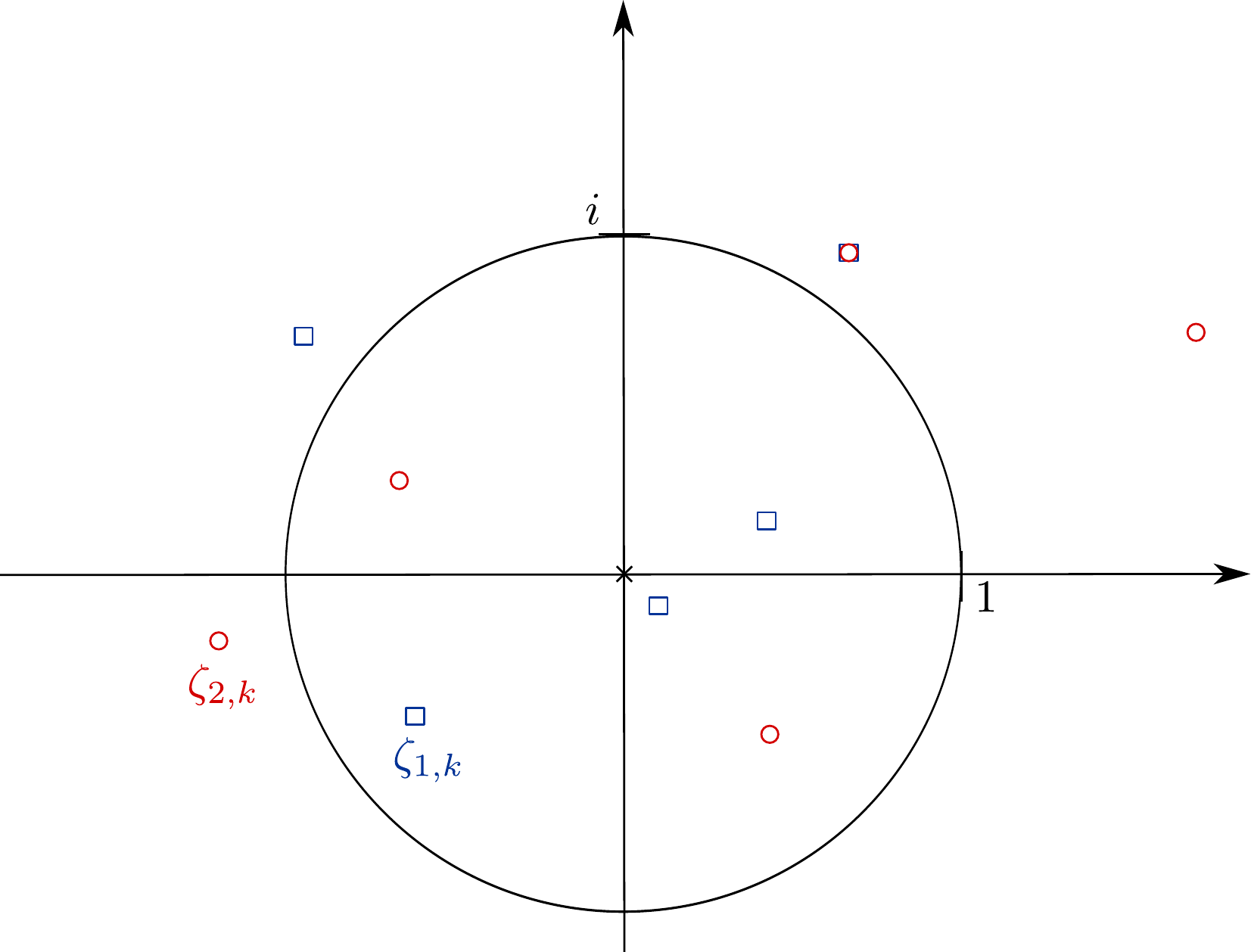
    \caption{Ambiguities for a convolution product with one common zero}\label{fig:x1x2ambi}
  \end{subfigure}
  \hspace{0.5cm}
  \begin{subfigure}[h]{0.48\textwidth}
    \def\svgwidth{0.96\columnwidth}
    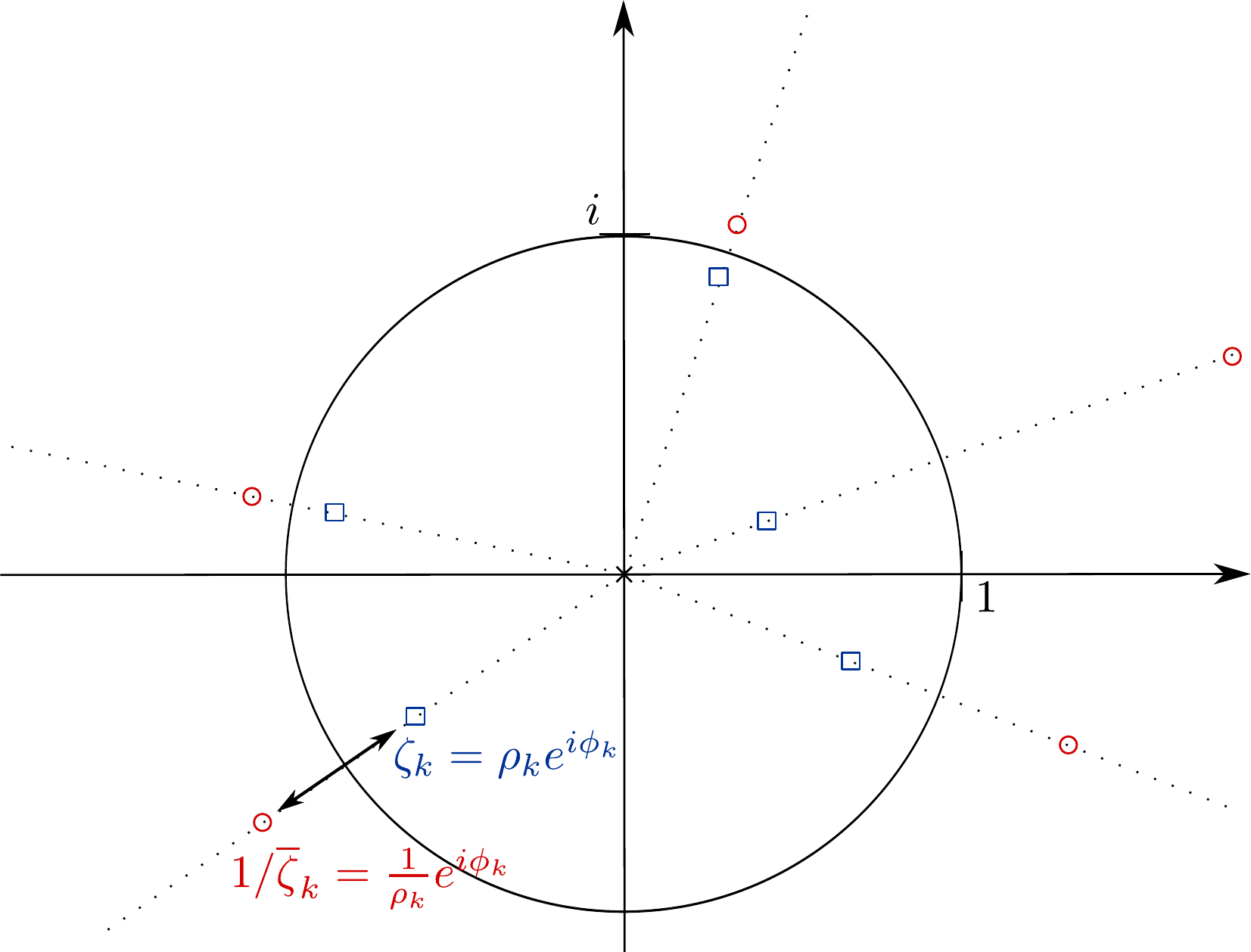
    \caption{Ambiguity for autocorrelation, by swapping the zeros of their conjugated-pairs.}\label{fig:ambiguityautcorN}
  \end{subfigure}
  \caption{Zero/pole plots for convolution products.}\label{fig:zeropoleplots}
\end{figure}
\fi

\begin{figure}[t]
  \begin{minipage}{0.48\textwidth}
    \centering
    \def\svgwidth{0.96\columnwidth}
    \input{ZeroXoneXtwo_ieee.pdf_tex}
    \caption{Ambiguities for a convolution product with one common zero}\label{fig:x1x2ambi}
  \end{minipage}
  \hspace{0.5cm}
  \begin{minipage}{0.48\textwidth}
    \def\svgwidth{0.96\columnwidth}
    \input{ZeroMin_ieee2.pdf_tex}
    \caption{Ambiguity for autocorrelation, by swapping the zeros of their conjugated-pairs.}\label{fig:ambiguityautcorN}
  \end{minipage}
\end{figure}
\paragraph{Ambiguities of Autocorrelations} A very well investigated special case of  blind deconvolution is the
reconstruction of the signal of $\vx\in\C^N$ by its linear or aperiodic autocorrelation, see for example \cite{BP15},
given as the convolution of $\vx$ with its \emph{conjugate-time-reversal} $\cc{\Rvx}$ defined component-wise by
$(\cc{\Rvx})_k=\cc{x_{N-1-k}}$ for $k\in [N]$.  To transfer this in the $z-$domain we need to define on the polynomial
ring $\C[z]$ an \emph{involution} $(\cdot)^*$, given for any polynomial $\uX\in\C[z]$ of degree $N-1$ by
\begin{align}
  \uX^*(z):=z^{1-N} \cc{\uX(1/\cc{z})}. 
\end{align}
Then, the autocorrelation $\va=\vx*\cc{\Rvx}$ in the time-domain transfers to  $\uA=\uX\uX^*$ in the $z-$domain. 
All non-trivial correlation ambiguities are then given by assigning for the conjugated-zero-pairs $(\zeta_k,\zeta_k^*)$
of $\uA$ one zero to $\tuX$. Since we do not have more than $N-1$ different zeros for $\tuX$, we have not more than
$2^{N-1}$ different factorization ambiguities, see \figref{fig:ambiguityautcorN}. The scaling ambiguities reduce by
$\lam\cc{\lam}=1$ to a \emph{global phase scaling} $e^{i\phi}$ for any $\phi\in\R$.

\paragraph{Well-posed Blind Deconvolution Problems} To guarantee a unique solution of a deconvolution problem up to a
global scalar \cite[Def.1]{CM15a}, we have to resolve all non-trivial convolution ambiguities, which demands therefor a
unique disjoint structure on the zeros of $\uX_1$ and $\uX_2$.  The most prominent structure for a unique factorization
is given by the spectral factorization (phase retrieval) for minimum phase signals, i.e., for signals $\uX$ having all
its zeros  inside the unit circle (a zero on the unit circle has  even multiplicity and a swapping of its conjugated
pair has therefor no effect). Another structure for a blind deconvolution would be to demand that $\uX_1$ has all its
zeros inside the unit circle and $\uX_2$ has all its zeros strictly outside the unit circle. In fact, every separation
would be valid, as long as it is practical realizable for an application setup. In this spirit, the condition that
$\uX_1$ and $\uX_2$ do no have a common zero is equivalent with the statement that a unique separation is possible. This
is the weakest and hence a necessary structure  we have to demand on the zeros, which  was already exploited in
\cite{XLTK95}. However, the challenge is still to find an efficient and stable reconstruction algorithm, which have to come
with a price of further structure and constrains. But, instead of designing further constraints on the zeros, one can
also demand further measurements of $\vxone$ and $\vxtwo$. In the next section we will introduce an efficient recovery
algorithm given by a convex program with the knowledge of additional autocorrelation measurements.  

\section{Blind Deconvolution  with Additional Autocorrelations via SDP}

Since the \emph{autocorrelation} of a signal $\vx\in\C^N$ does not contain enough information to obtain a unique
recovery, as shown in the previous section, the idea is to use cross-correlation informations of the signal by
partitioning $\vx$ in two disjoint signals $\vx_1\in\C^{L_1}$ and $\vx_2\in\C^{L_2}$, which yield $\vx$ if stacked
together. This approach was first investigated in \cite{RDN13} and called \emph{vectorial phase retrieval}. 
The same approach was obtained independently by one of the authors in  \cite[Thm.  III.1]{JH16}, which steamed from a
generalization of a phase retrieval design in \cite[Thm.4.1.4.]{Jag16}, from three masked Fourier magnitude-measurements
in $N$ dimension, to a purely correlation measurement design between arbitrary vectors $\vxone\in\C^{\Lone}$ and
$\vxtwo\in\C^{\Ltwo}$.  To solve the phase retrieval problem via a semi-definite program (SDP), the autocorrelation or
equivalent the Fourier magnitude-measurements has to be represented as linear mappings on positive-semidefinite rank$-1$
matrices.  This is know as the \emph{lifting approach} or in mathematical terms as the \emph{tensor calculus}.  The
above partitioning of $\vx$ yields to a block structure of the positive-semidefinite matrix  
\begin{align}
  \vx\vx^*= \begin{pmatrix}\vx_1\\ \vx_2\end{pmatrix}\begin{pmatrix}\vx_1^* & \vx_2^*\end{pmatrix}=\begin{pmatrix}
    \vx_1\vx_1^* & \vx_1\vx_2^*\\
    \vx_2\vx_1^* & \vx_2\vx_2^*\end{pmatrix}.\label{eq:xx4block}
\end{align}
The linear measurement $\Alin$ are then given component-wise by the inner products with the sensing matrices
$\vA_{i,j,k}$, defined below, which correspond to the $k$th correlation components of $\vx_i *\cc{\vx^-_j}$ for
$i,j\in\{1,2\}$.  Hence, the autocorrelations and cross-correlations can be obtain from the same object $\vx\vx^*$. 
Let us define the $N\times N$ down-shift and $N\times L$ embedding matrix as 
\begin{align}
  \vT_N=
    \begin{pmatrix} 0 & \dots &  0 & 0 \\ 1 & \dots & 0 & 0\\
      \vdots & \ddots &  \vdots & \vdots\\
      0 & \dots & 1 & 0\\
    \end{pmatrix}
    \quad\text{and}\quad
  \ProjNL& =\begin{pmatrix}
    \id_{L, L}\\
    \zero_{N-L, L}\end{pmatrix},
    \label{eq:timeshiftmatrix}
\end{align}
where $\id_{L,L}$ denotes the $L\times L$ identity matrix and $\zero_{N-L,L}$ the $(N-L)\times L$ zero matrix.  Then, the $L_i\times
L_j$ rectangular shift matrices%
\footnote{Note, it holds not $\vT_{L_j,L_i}^{(k)}=\vT_{L_i,L_j}^{(k)}$ unless $L_i=L_j$, cause the involution in the
vector-time domain is $\cc{(\cdot)^-}$.} 
are defined as 
\begin{align}
  (\vT^{(k)}_{\!L_j,L_i})^T:=\Proj^T_{N,L_i}\vT^{k-L_j+1}_{N}\ProjNLj,
       \label{eq:LiLj}
\end{align}
for $k\in\{0,\dots,L_i+L_j-2\}=:[L_i+L_j-1]$, where we set $\vT^l_N:=(\vT^{-l}_N)^T$ if $l<0$. Then, the
\emph{correlation} between $\vx_i\in\C^{L_i}$ and $\vx_j\in\C^{L_j}$  is given component-wise\footnote{We use here the
vector definition and hence the time-reversal $\vx^-$ of the signal $\vx$ is a flipping of the vector coefficient indices in
$[L_i]$  and not a flipping at the origin $0$ as defined for sequences. The scalar product is given as
$\skprod{\va}{\vb}:=\sum_{k}a_k\cc{b_k}$.}   as 
\begin{align*}
  (\!\va_{i,j}\!)_k:=(\vx_i*\cc{\Rvxj})_{k}&= \skprod{\vx_i}{(\vT^{(k)}_{\!L_j\!,L_i}\!)^T\vx_j}
  =\cc{\skprod{\vx_j}{\vT^{(k)}_{\!L_j\!,L_i}\vx_i}}= \vx_j^* \vT^{(k)}_{\!L_j\!,L_i}\vx_i = \tr(\vT^{(k)}_{\!L_j\!,L_i}\vx_i\vx_j^*).
\end{align*}
Hence, this defines the linear maps $\Alin_{i,j,k}(\vX):=\tr(\vA_{i,j,k}\vX)$ for $k\in[L_i+L_j-1]$ with sensing matrices  
\begin{flalign}
   &&\begin{split}
      \vA_{1,1,k} &= \begin{pmatrix} \vT^{(k)}_{L_1,L_1} &\zero_{L_1,L_2}\\ \zero_{L_2,L_1}
        &\zero_{L_2,L_2}\end{pmatrix}  \!\!\quad,\quad k\in[2L_1-1]\end{split} && \label{eq:a11}\\
   && \begin{split}
       \vA_{2,2,k} &= \begin{pmatrix} \zero_{L_1,L_1} & \zero_{L_1,L_2}\\ \zero_{L_2,L_1} & \vT^{(k)}_{L_2,L_2} \end{pmatrix} 
    \! \!\quad,\quad k\in [2L_2-1] \end{split} &&\\ 
   && \begin{split} 
     \vA_{1,2,k} & = 
     \begin{pmatrix} \zero_{L_1,L_1} &  \zero_{L_2,L_1}\\ \vT^{(k)}_{L_2,L_1}& \zero_{L_2,L_2}\end{pmatrix} 
   \!\!\quad,\quad k\in [L_1+L_2-1] \end{split}&&\\
  && \begin{split}  
    \vA_{2,1,k}& = \begin{pmatrix} \zero_{L_1,L_1} & \vT^{(k)}_{L_1,L_2}\\ \zero_{L_2,L_1} & \zero_{L_2,L_2}\end{pmatrix} 
  \!\!\quad,\quad k\in [L_1+L_2-1].\end{split}&& 
   \label{eq:a21}
\end{flalign}
Stacking all the $\Alin_{i,j}$ together gives the measurement map $\Alin$. Hence, the $4N-4$ linear measurements are 
\begin{align}
   \vb:= \Alin(\vx\vx^*)
      =\begin{pmatrix} 
        \Alin_{1,1}(\vx\vx^*)\\ 
        \Alin_{2,2}(\vx\vx^*)\\
        \Alin_{1,2}(\vx\vx^*)\\
        \Alin_{2,1}(\vx\vx^*)
      \end{pmatrix}
    =
    \begin{pmatrix} 
      \vxone*\cc{\Rvxone} \\ 
      \vxtwo*\cc{\Rvxtwo}\\
      \vxone*\cc{\Rvxtwo} \\
      \vxtwo*\cc{\Rvxone}
    \end{pmatrix}=:\begin{pmatrix}\va_{1,1}\\\va_{2,2}\\\va_{1,2}\\\va_{2,1}\end{pmatrix}.\label{eq:3Nb}
\end{align}
Note, since the cross-correlation $\va_{1,2}$ is the conjugate-time-reversal of $\va_{2,1}$, i.e.,
$\va_{1,2}=\cc{\va_{2,1}^{-}}$, we only need $3N-3$ correlation measurements to determine $\vb$.

\subsection{Unique Factorization of Self-Reciprocal Polynomials}

To prove our main result in \thmref{thm:4correlation} we need a unique factorization of self-reciprocal polynomials in
irreducible self-reciprocal polynomials, where we call a polynomial $\uX$ \emph{self-inversive} if $\uX^*=e^{i\alp}\uX$
for some $\alp\in[0,2\pi)$ and \emph{self-reciprocal}
\footnote{In the literature there also called conjugate-self-reciprocal to distinguish them from the real case $\K=\R$.
  For $\K=\R$ or $\K=\Z$ they are also called \emph{palindromic polynomials} or simply \emph{palindromes} (Coding
Theory).}
if $\alp=0$, see for example \cite{Vie15} and reference therein. The term self-reciprocal refers to the
\emph{conjugate-symmetry} of the coefficients, given by
\begin{align}
  \vx=\cc{\Rvx}\in\C^N,
\end{align}
which can be used as the definition  of a self-reciprocal polynomial by its coefficients.
In fact, it was shown by some of the authors in \cite{WJ14b} and \cite{WJP15}, that the autocorrelation of conjugate-symmetric
vectors is stable up to a global sign.
As for the unique factorization \eqref{eq:uniquefacX} of any polynomial $\uX\in\C[z]$ of degree $D\geq1$ in $D$
irreducible polynomials (primes) $\uP_k(z)=1-\zeta_kz^{-1}$,  up to a unit $u\in\C\setminus\{0\}$, we can ask for a unique
factorization of any self-reciprocal polynomial $\uS\in\C[z]$ in irreducible self-reciprocal polynomials $\uS_k$, i.e.,
$\uS_k$ can not be further factored in self-reciprocal polynomials of smaller degree. 
To see this, we first use the definition of a self-reciprocal factor $\uS$ of degree $D$, which demands that each zero
$\zeta$ comes with its conjugate-inverse pair $1/\cc{\zeta}=:\zeta^*$. If $\zeta$ lies on the unit circle, then we have $\zeta=\zeta^*$
and the multiplicities of these zeros can be even or odd. Let us assume we have $T$ zeros on the unit circle, then we
get the factorization
\begin{align*}
  \uS(z)= u\Pro_{k=1}^{\frac{D-T}{2}}(1-\zeta_k z^{-1}) (1-\cc{\zeta_k^{-1}} z^{-1}) \Pro_{k=D-T+1}^D (1-\zeta_k
  z^{-1}), 
\end{align*}
where the phase $\alp$ of the unit $u\in\C$ is determined by the phases $\alp_k$ of the conjugate-inverse zeros. To see this
we derive
\begin{align}
  \uS^*\!(z) &= z^{-D} \cc{u} \Pro_{k=1}^{\frac{D-T}{2}} (1-\cc{\zeta_k} z) (1-\zeta_k^{-1} z) \Pro_{k=D-T+1}^D
  (1-\cc{\zeta_k} z)\notag\\
  &=  \cc{u} \Pro_{k} \frac{\cc{\zeta_k}}{\zeta_k}(1-\frac{1}{\cc{\zeta_k}}z^{-1})(1-\zeta_k z^{-1})
  \Pro_k (-\cc{\zeta_k})(1-\frac{1}{\cc{\zeta_k}} z^{-1}).\notag
\intertext{If we set for the zeros $\zeta_k=\rho_k e^{i\alp_k}$ and unit $u=\rho e^{i\alp}$ we get}
   &=  e^{-i(2\alp + \sum_{k=1}^{D} \alp_k -T\pi)} \uS(z)\overset{!}{=} \uS(z).\notag
\end{align}
Hence, it must hold for the phase $\alp=(T\pi-\sum_{k=1}^{D}\alp_k)/2$.   Moreover, for every prime $\uP_k$ of $\uS$
also $\uP_k^*$ is a prime of $\uS$. Hence, if $\uP_k\not=\uP_k^*$ then $\uS_k:=\uP_k\uP_k^*$ is a self-reciprocal factor
of $\uS$ of degree two. If $\uP_k=\uP_k^*$, then $\uS_k:=\uP_k$ is already a self-reciprocal factor of $\uS$ of degree
one. However, the conjugate-inverse factor pairs $(1-\zeta_kz^{-1})(1-\cc{\zeta_k}^{-1}z^{-1})$ are not self-reciprocal,
but self-inversive. We have to scale them with $e^{-i\alp_k}$ to obtain a self-reciprocal factor $\uS_k:=\uP_k\uP_k^*$,
i.e., we have to set $\uP_k(z):=\rho_k^{-1/2}(1-\zeta_k z^{-1})$.  Similar, for the primes on the unit circle, we set
$\uS_k(z):=e^{-i(\pi+\alp_k)/2}(1-e^{i\alp_k} z^{-1})$.  Hence, we can  write $\uS$ as a factorization of
\emph{irreducible self-reciprocal polynomials} $\uS_k$, i.e., self-reciprocal polynomials which are not further
factored in self-reciprocal polynomials of smaller degree,
\begin{align}
  \uS=\Pro_{k=1}^{\frac{D-T}{2}} \underbrace{\uP_k\uP_k^*}_{=\uS_k} \Pro_{k=D-T+1}^D \uS_k.
\end{align}
Let us define the greatest self-reciprocal factor/divisor (GSD).
%
\begin{definition}[Greatest Self-Reciprocal Divisor]
  Let $\uX\in\C[z]$ be a non-zero polynomial. Then the \emph{greatest self-reciprocal divisor} 
  $\uS$ of $\uX$ is the self-reciprocal factor with largest degree. It is unique up to a real-valued trivial factor 
  $c\in\R$. 
\end{definition}
%
Let us denote by $\uC\mid\uX$ that $\uC$ is a factor/divisor of the polynomial $\uX$ and by $\uC\nmid\uX$ that $\uC$ is
not.  Then $\uC\mid\uX$ and $\uC\mid\uY$ is equivalent to the assertion that $\uC$ is a common factor of $\uX$ and
$\uY$.  For any polynomial $\uX\in\C[z]$,  which factors in $\uX=\uS\uR$, it holds
\begin{align}
  \uS \text{ self-reciprocal }\quad \RA \quad \uS\mid\uX \text{ and }\uS\mid\uX^*,\label{eq:sfiscf}
\end{align}
since it holds by the self-reciprocal property of $\uS$  
\begin{align}
    \uX^*=\uS^*\uR^*=\uS\uR^*,
\end{align}
which proofs that $\uS\mid\uX$ and $\uS\mid\uX^*$.  For the reverse we can only show this for the \emph{greatest common
divisor} (GCD).
%
\begin{lemma}\label{lem:gsfgcf}
  For $\uX\in\C[z]$ it holds
  \begin{align}
   \uG \text{ is GSD of }\uX \quad \LRA \quad \uG \text{ is GCD of  }\uX \text{ and }\uX^*.
  \end{align}
\end{lemma}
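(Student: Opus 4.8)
The plan is to prove the equivalence by isolating two facts about $\gcd(\uX,\uX^*)$ and then comparing degrees. The tools I will use are that the involution $(\cdot)^*$ is multiplicative, $(\uX\uY)^*=\uX^*\uY^*$, and involutive, $(\uX^*)^*=\uX$ — both read off the coefficient formula $(\uX^*)_j=\cc{x_{N-1-j}}$ — together with the standard fact that in the principal ideal domain $\C[z]$ the GCD is unique up to a unit (a nonzero constant) and every common divisor divides it. The implication \eqref{eq:sfiscf}, already established, will supply one half of the argument for free.

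First I would show that $\uH:=\gcd(\uX,\uX^*)$ can be normalized to be self-reciprocal. Writing $\uX=\uH\uR$ and $\uX^*=\uH\uR'$ and applying the involution gives $\uX^*=\uH^*\uR^*$ and $\uX=(\uX^*)^*=\uH^*(\uR')^*$, so $\uH^*$ is again a common divisor of $\uX$ and $\uX^*$. Since $*$ is degree-preserving (see the caveat below) and the GCD is unique up to a unit, $\uH^*=c\uH$ for some nonzero constant $c$; applying $*$ once more forces $|c|=1$, so $\uH$ is self-inversive. Exactly as in the factorization computation preceding the GSD definition, rescaling by a suitable constant $\mu$ with $\mu/\cc{\mu}=c$ turns $\uH$ into a genuine self-reciprocal divisor of $\uX$ of the same degree, and I fix $\uH$ in this normalization from now on.

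Next I would invoke \eqref{eq:sfiscf}: every self-reciprocal divisor $\uS$ of $\uX$ satisfies $\uS\mid\uX$ and $\uS\mid\uX^*$, hence is a common divisor and therefore $\uS\mid\uH$. These two facts close both directions. If $\uG$ is the GSD, then $\uG$ is a self-reciprocal divisor, so $\uG\mid\uH$; but $\uH$ is itself a self-reciprocal divisor of $\uX$, so maximality of the GSD gives $\deg\uH\le\deg\uG$, and together with $\uG\mid\uH$ this forces $\uG=\uH$ up to a constant, i.e.\ $\uG$ is the GCD. Conversely, if $\uG$ is the GCD then, in its self-reciprocal normalization, it is a self-reciprocal divisor of $\uX$, while every self-reciprocal divisor $\uS$ divides $\uG$ and hence has $\deg\uS\le\deg\uG$; thus $\uG$ has maximal degree among the self-reciprocal divisors and is the GSD.

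The main obstacle I expect is not the divisibility algebra but the bookkeeping around the involution. The operation $(\cdot)^*$ is degree-preserving and cleanly multiplicative only once the length $N$ is fixed consistently and up to trivial monomial factors $z^{-1}$ (zeros at the origin/infinity, which $\sigma$ pairs with each other), so the statements ``$\deg\uH^*=\deg\uH$'' and ``$\uH^*=c\uH$'' must be understood modulo such trivial factors; moreover the natural GCD normalization is only self-inversive, and the passage to a self-reciprocal representative — where the remaining ambiguity genuinely shrinks to a real factor $c\in\R$ — is precisely the step that separates the two notions. A clean way to avoid this is to argue on zero multisets: if $Z$ is the zero multiset of $\uX$ and $\sigma(\zeta)=1/\cc{\zeta}$ the conjugate-inversion, then $\uX^*$ has multiset $\sigma(Z)$, the GCD has multiset $Z\cap\sigma(Z)$, and $\sigma(Z\cap\sigma(Z))=\sigma(Z)\cap Z=Z\cap\sigma(Z)$ shows this intersection is $\sigma$-invariant, i.e.\ self-reciprocal; since a multiset contained in $Z$ is $\sigma$-invariant exactly when it is contained in $Z\cap\sigma(Z)$, both maximality claims follow at once.
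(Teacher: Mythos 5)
Your proof is correct, and it takes a genuinely different route from the paper's. The paper proves the hard direction by contradiction: it writes the GCD $\uG$ as a product of its greatest self-reciprocal part $\uS$ and a non-self-reciprocal co-factor $\uN$ (with extra bookkeeping $\uS=\uI_1\uI_2$, $\uN=\uI_1\uN_2$ to handle repeated zeros off the unit circle), shows $\uN$ is coprime to $\uN^*$ so that $\uN\mid\uR^*$, and finally exhibits $\uG\uN^*$ as a common divisor of $\uX$ and $\uX^*$ strictly larger than $\uG$ unless $\uN$ is trivial. You instead exploit the uniqueness of the GCD directly: since $(\cdot)^*$ is multiplicative and involutive, $\uH^*=\gcd(\uX,\uX^*)^*$ is again a common divisor of the same pair, hence $\uH^*=c\uH$ with $|c|=1$, and a unit rescaling with $\mu/\cc{\mu}=c$ makes $\uH$ self-reciprocal; both directions of the equivalence then reduce to the paper's own observation \eqref{eq:sfiscf} that every self-reciprocal divisor is a common divisor, plus a degree comparison. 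This is shorter, avoids the multiplicity bookkeeping entirely, and treats the two directions symmetrically — note that the paper's stated ``$\RA$'' on its own only shows the GSD is \emph{a} common divisor and implicitly needs the converse analysis to see it is the \emph{greatest} one. Your zero-multiset reformulation (the GCD has zero multiset $Z\cap\sigma(Z)$, which is manifestly $\sigma$-invariant) compresses the whole argument into one line. The caveats you flag — that $(\cdot)^*$ is degree-preserving and multiplicative only in the absence of zeros at the origin, and that the GCD is a priori only self-inversive, becoming self-reciprocal only after fixing the unit up to a real constant — are genuine, but they affect the paper's proof equally and are covered by the standing assumptions ($x_0\not=0\not=x_{L-1}$, no infinite zeros) used elsewhere in the paper.
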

%
\begin{proof}
  The ``$\RA$'' follows from \eqref{eq:sfiscf} since a GSD is trivially also a self-reciprocal factor of $\uX$ and
  therfor a factor of $\uX^*$. To see
  the other direction, we denote by $\uG$ the GCD of $\uX$ and $\uX^*$, which factorize as 
  \begin{align}
    \uX=\uG\uR \quad\text{and}\quad \uX^*=\uG\uQ\label{eq:xxstar},
  \end{align}
  where $\uR$ and $\uQ$ are the co-factors of $\uX$ respectively $\uX^*$. Then we get
  \begin{align}
    \uX^* =\uG^*\uR^* = \uG\uQ\label{eq:uxgrgq}.
  \end{align}
  Let us assume $\uG$ is not self-reciprocal, i.e., $\uG\not=\uG^*$, then we can still factorize $\uG$, as any
  polynomial,  in the \emph{greatest self-reciprocal factor} $\uS$ and a \emph{non-self-reciprocal factor} $\uN$. Note,
  it might also hold the trivial case $0\not=\uS=c\in\R$. Moreover, if the multiplicity of at least one zero in $\uS$,
  not lying on the unit circle, is larger than one, then $\uN$ might contain this zero (if the corresponding
  conjugate-inverse zero is missing in $\uG$). It is clear, that $\uN$ can not contain more than $(D-T)/2$ such isolated
  factors, lets call the product of all them $\uI_1$ and $\uI_2$ resp. $\uN_2$ the co-factors, i.e., $\uS=\uI_1\uI_2$
  and $\uN=\uI_1\uN_2$. Hence,
  $\uI_1$ is the GCD of $\uS$ and $\uN$. Then \eqref{eq:xxstar} becomes 
  \begin{align}
    \uX=\uS\uN\uR \quad \text{and}\quad  \uX^*=\uS\uN^*\uR^*\label{eq:snq},
  \end{align}
  which yields to
  \begin{align}
    \uG\mid\uX^* \LRA \uS\uN \mid \uS\uN^*\uR^* \LRA \uN\mid\uN^*\uR^* 
    \RA \uN \mid \uI_1^* \uN_2^*\uR^*.
  \end{align}
  Then $\uI_1^*\nmid\uN$,  since, if any factor $\tuI_1^*\subset\uI_1^*$ would be a factor of $\uN$, then
  also $\tuI_1^*\mid \uN^*$ and hence $\tuI_1\mid \uN$ and therefore $\tuI_1\tuI_1^*\mid \uN$,
  which would be a  non-trivial self-reciprocal factor and contradicts the definition of $\uN$. By the same reason
  $\uN_2^*\nmid\uN$ since any non-trivial factor of $\uN_2^*$ would result in a non-trivial self-reciprocal factor of
  $\uN$ which is again a contradiction. Hence $\uN\mid\uR^*$, i.e., we have $\uR^*=\uN\uT$ which yields to 
  \begin{align}
    \uR=(\uR^*)^*=(\uN\uT)^* = \uN^* \uT^*.
  \end{align}
  On the other hand it holds also 
  \begin{align*}
    \uG\uQ \overset{\eqref{eq:uxgrgq}}{=}\uX^*\overset{\eqref{eq:snq}}{=}\uS\uN^*\uR^*
       =  \uS\uN^*\uN\uT =\uG\uN^*\uT  \quad \RA\quad \uQ=\uN^*\uT.
  \end{align*}
  Hence $\uN^*\mid\uR$ and $\uN^*\mid\uQ$ and by \eqref{eq:xxstar} also $\uN^*\mid\uX$ and $\uN^*\mid\uX^*$, which is a
  contradiction, since $\uG$ is the GCD of $\uX$ and $\uX^*$. Hence the assumption is wrong and it must hold
  $\uG=\uG^*$. To see that $\uG$ is also the GSD, assume $\tuG$ would be self-reciprocal and contain $\uG$ as factor,
  then $\tuG$ would be by \eqref{eq:sfiscf} a common factor which is greater then $\uG$ and hence contradicts again with
  $\uG$ to be the GCD. 
\end{proof}
\if0
\begin{remark}
  Note, the other direction in \eqref{eq:sfiscf} is not always true. Take for example the polynomial $\uX=\uS\uS^*\uR$
  with $\uS=1-\zeta z^{-1}$ with $|\zeta|\not=1$. Then $\uX^*=\uS^*\uS\uR^*$ and clearly $\uS$ is a common factor but it
  is not even self-inversive since $\uS^*=z^{-1} (1-\cc{\zeta_k} z)= -\cc{\zeta}
  (1-\cc{\zeta^{-1}}z^{-1})\not=e^{i\alp}(1-\zeta z^{-1})$ for any $\alp\in\R$. 
\end{remark}
\fi 

Let us define the \emph{anti-self-reciprocal polynomial} $\uA$ by the property $\uA=-\uA^*$, where $i$ is the
trivial\footnote{Actually, also $ic$ for any $c\in\R$ would be a trivial anti-self-reciprocal factor. But since we are
interested in the factorization of a anti-self-reciprocal polynomial in a self-reciprocal $\uS$ and a trivial
anti-self-reciprocal $i$, we can assign the $c$ either to $i$ or to $\uS$.} anti-self-reciprocal factor. Hence, for any
self-reciprocal factor $\uS$ we get by $\uA=i\uS$ an anti-self-reciprocal factor. Hence, if we factorize $\uX$ in the
GSD $\uG$ and the co-factor $\uR$, we obtain with the identity $-i\cdot i=1$ the factorization
\begin{align}
  \uX=i\uG\tuR,
\end{align}
where $\tuR=-i\uR$ does not contain non-trivial self-reciprocal or anti-self-reciprocal factors.  With this we can show the
following result.
%
\begin{lemma}\label{lem:antiselfinversive}
  Let $\uX\in\C[z]$ be a polynomial of order $L\geq 0$ with no infinite zeros and let $\uG,\uR$ be polynomials with
  $\uX=\uG\uR$, where $\uG$ is the GSD of degree $D\leq L$ and $\uR$ its co-factor. Then the only non-zero polynomial
  $\uH$ of order $\leq L$, which yields to a anti-self-reciprocal product $\uX\uH^*$, i.e., fulfills
  \begin{align}
    \uX\uH^* + \uX^*\uH=0\label{eq:xh}
  \end{align}
  is given by $\uH=i\uR\uS$ for any self-reciprocal polynomial $\uS$ of degree $\leq D$.
\end{lemma}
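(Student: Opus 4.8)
The plan is to read \eqref{eq:xh} as the assertion that $\uX\uH^*$ is anti-self-reciprocal and to exploit that the involution $(\cdot)^*$ is \emph{multiplicative}, $(\uP\uQ)^*=\uP^*\uQ^*$; the hypothesis that $\uX$ has no infinite zeros is exactly what keeps all degrees additive and makes $(\cdot)^*$ a genuine involution on the factors we use. I would first record the easy (sufficiency) direction in order to expose the mechanism I later reverse. Writing $\uX=\uG\uR$ with $\uG^*=\uG$ (the GSD is self-reciprocal) and taking $\uH=i\uR\uS$ with a self-reciprocal $\uS$, multiplicativity gives $\uH^*=\cc{i}\,\uR^*\uS^*=-i\uR^*\uS$, so that $\uX\uH^*=-i\uG\uR\uR^*\uS$ while $\uX^*\uH=\uG\uR^*\cdot i\uR\uS=i\uG\uR\uR^*\uS$, and the two cancel in \eqref{eq:xh}. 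This confirms the claimed form and pinpoints the algebraic identity to be inverted.

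For necessity, suppose $\uH$ of order $\leq L$ satisfies \eqref{eq:xh}. Using $\uX=\uG\uR$ and $\uX^*=\uG^*\uR^*=\uG\uR^*$, equation \eqref{eq:xh} becomes $\uG(\uR\uH^*+\uR^*\uH)=0$, and cancelling the nonzero factor $\uG$ in the integral domain $\C[z]$ reduces it to $\uR\uH^*=-\uR^*\uH$. The decisive structural input is now that $\uR$ and $\uR^*$ are coprime. Indeed, since $\uG$ is the GSD of $\uX$, the co-factor $\uR$ can carry no non-trivial self-reciprocal divisor: such a divisor multiplied by $\uG$ would be a self-reciprocal divisor of $\uX$ of larger degree than $\uG$, contradicting maximality. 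Hence the GSD of $\uR$ is trivial, and by \lemref{lem:gsfgcf} the GCD of $\uR$ and $\uR^*$ is trivial as well. From $\uR\mid\uR^*\uH$ together with $\gcd(\uR,\uR^*)=1$ I then conclude $\uR\mid\uH$, so $\uH=\uR\uQ$ for some polynomial $\uQ$.

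Substituting $\uH=\uR\uQ$ into $\uR\uH^*=-\uR^*\uH$ and using multiplicativity yields $\uR\uR^*\uQ^*=-\uR^*\uR\uQ$; cancelling the nonzero factor $\uR\uR^*$ leaves $\uQ^*=-\uQ$, i.e. $\uQ$ is anti-self-reciprocal. Every such $\uQ$ is $i$ times a self-reciprocal polynomial: with $\uS:=-i\uQ$ one checks $\uS^*=\cc{(-i)}\,\uQ^*=i(-\uQ)=-i\uQ=\uS$, so $\uS$ is self-reciprocal and $\uH=\uR\uQ=i\uR\uS$. A final degree count closes the statement: $\uR$ has degree $L-D$, so $\uH=i\uR\uS$ has order $\leq L$ precisely when $\deg\uS\leq D$; conversely any self-reciprocal $\uS$ with $\deg\uS\leq D$ produces an admissible $\uH$, matching the claim. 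I expect the coprimality step to be the main obstacle, since it is where \lemref{lem:gsfgcf} is indispensable for converting "$\uR$ has no self-reciprocal factor" into the usable "$\gcd(\uR,\uR^*)=1$"; a secondary point demanding a one-line remark is that, because $(\cdot)^*$ is only a multiplicative involution once infinite zeros are excluded, the hypothesis guarantees $\uG$, $\uR$ (and hence the derived $\uH$) have nonzero constant term, so none of the cancellations above spuriously drop solutions.
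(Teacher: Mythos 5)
Your proof is correct and follows essentially the same route as the paper's: you use \lemref{lem:gsfgcf} to write $\uX^*=\uG\uR^*$, cancel the GSD $\uG$ to reduce \eqref{eq:xh} to $\uR\uH^*+\uR^*\uH=0$, and then invoke the coprimality of $\uR$ and $\uR^*$, exactly as the paper does. The only difference is one of completeness: where the paper simply asserts that coprimality forces $\uH=i\uR\uS$ (deferring to a remark on the Sylvester criterion), you finish the step explicitly via Euclid's lemma ($\uR\mid\uR^*\uH$ and $\gcd(\uR,\uR^*)=1$ give $\uR\mid\uH$, then $\uH=\uR\uQ$ forces $\uQ^*=-\uQ$, i.e.\ $\uQ=i\uS$ with $\uS$ self-reciprocal), and you also justify $\gcd(\uR,\uR^*)=1$ by combining the maximality of the GSD with \lemref{lem:gsfgcf} applied to $\uR$ — details the paper leaves implicit.
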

%
\begin{proof}
  Since $\uX$ factors in the GSD $\uG$ and the co-factor $\uR$ we have by \lemref{lem:gsfgcf} that $\uG$ is the
  GCD of $\uX$ and $\uX^*$, which gives $\uX^*=\uG\uR^*$.
  Inserting this factorization in \eqref{eq:xh} yields to 
  \begin{align}
    \uG(\uR\uH^* +\uR^*\uH)=0 \LRA \uR\uH^* + \uR^*\uH=0.\label{eq:coprimerrstar}
  \end{align}
  Since $\uR$ and $\uR^*$ do not have a common factor by definition of $\uG$, but have degree $L-D$, which is less or
  equal then $\uH$ and $\uH^*$ (degree $\leq L$), the only solution is of the form 
  \begin{align}
    \uH=i\uR \uS,\label{eq:irs}
  \end{align}
  where $\uS$ is any self-reciprocal polynomial of degree $\leq D$. 
\end{proof}
\begin{remark}
  This result \eqref{eq:irs} can be seen as a special case of the Sylvester Criterion for the  polynomials
  $\uA=\uX_1$ and $\uB=\uX_2^*$ in \eqref{eq:xh}, where $\uG$ is the GCD of $\uA$ and $\uB$. Hence $\uH$ and $\uH^*$
  have GCD $\uS$ of degree $D$, which must be by \lemref{lem:gsfgcf} the GSD with  the
  co-factors  $i\uR$ respectively $-i\uR^*$, see \appref{app:sylvester}. 
\end{remark}

\if0
As for the unique factorization of ever polynomial $\uX\in\C[z]$ of degree $D\geq1$ in $D$ irreducible polynomials
$\uP_k=(z-\zeta_k)$ (called primes) up to a unit $u\in\C\setminus\{0\}$ we can ask for unique factorization of a
self-reciprocal polynomial $\uS\in\C[z]$ in irreducible self-reciprocal polynomials $\uS_k$. 

To see this we first use the definition of a self-reciprocal factor $\uS$ of degree $D$ which demands that each zero
$\zeta$ comes with its conjugate-inverse pair $1/\cc{\zeta}=:\zeta^*$. If $\zeta\in\Torus$ then we have $\zeta=\zeta^*$
and the multiplicities of these zeros can be even or odd. Let us assume we have $T$ zeros on the unit circle, then we
get the factorization
\begin{align*}
  \uS(z)= \rho e^{i\alp}\Pro_{k=1}^{\frac{D-T}{2}}(1-\zeta_k z^{-1}) (1-\cc{\zeta_k^{-1}} z^{-1}) \Pro_{k=D-T+1}^D (1-\zeta_k z^{-1}) 
\end{align*}
for some $\rho\in\R$ and phase $\alp\in\R$, which is determined by the phase of the conjugate-inverse zeros. To see this
we just compute
\begin{align}
  \uS^*\!(z) &= z^{-D} \cc{u} \Pro_{k=1}^{\frac{D-T}{2}} (1-\cc{\zeta_k} z) (1-\zeta_k^{-1} z) \Pro_{k=D-T+1}^D
  (1-\cc{\zeta_k} z)\notag\\
  &=  \cc{u} \Pro_{k} \frac{\cc{\zeta_k}}{\zeta_k}(1-\frac{1}{\cc{\zeta_k}}z^{-1})(1-\zeta_k z^{-1})
  \Pro_k (-\cc{\zeta_k})(1-\frac{1}{\cc{\zeta_k}} z^{-1}).\notag
   \intertext{If we set for the zeros $\zeta_k=\rho_k e^{i\alp_k}$ and unit $u=\rho e^{i\alp}$ we get}
   &=  e^{-i(2\alp + \sum_{k=1}^{D} \alp_k -T\pi)} \uS(z)\overset{!}{=} \uS(z).\notag
\end{align}
Hence, it must hold for the phase $\alp=(T\pi-\sum_{k=1}^{D}\alp_k)/2$.   Moreover, for every
prime $\uP_k$ of $\uS$ also $\uP_k^*$ is a prime of $\uS$. Hence, if $\uP_k\not=\uP_k^*$ then $\uS_k:=\uP_k\uP_k^*$ is a
self-reciprocal factor of $\uS$ of degree two. If $\uP_k=\uP_k^*$, then $\uS_k:=\uP_k$ is already a self-reciprocal factor of
$\uS$ of degree one. However,
the conjugate-inverse factor pairs $(1-\zeta_kz^{-1})(1-\cc{\zeta_k}^{-1}z^{-1})$ are not self-reciprocal but
self-inversive. But we can scale them with $e^{-i\alp_k}$ to obtain a self-reciprocal factor $\uS_k:=\uP_k\uP_k^*$ by setting
$\uP_k:=\rho_k^{-1/2}(1-\zeta_k z^{-1})$.
Similar, for the primes on the unit circle, we set $\uS_k:=e^{-i(\pi+\alp_k)/2}(1-e^{i\alp_k} z^{-1})$. 
Hence we can also write $\uS$ as a factorization of \emph{irreducible self-reciprocal polynomials} $\uS_k$, i.e.,
self-reciprocal polynomials which are not further reducible to self-reciprocal polynomials of smaller degree,
\begin{align}
  \uS(z)=\Pro_{k=1}^{\frac{D-T}{2}} \underbrace{\uP_k\uP_k^*}_{=\uS_k} \Pro_{k=D-T+1}^D \uS_k.
\end{align}
Let us define the greatest self-reciprocal factor (GSF).
\begin{definition}[Greatest Self-Reciprocal Factor]
  Let $\uX\in\C[z]$ be a non-zero polynomial. Then the greatest self-reciprocal factor
  $\uS$ of $\uX$ is the factor with largest degree which is self-reciprocal. It is unique up to a real-valued trivial factor 
  $c\in\R$. 
\end{definition}
For any polynomial $\uX\in\C[z]$,  which factors in $\uX=\uS\uR$, it holds
\begin{align}
  \uS \text{ self-reciprocal }\quad \RA \quad \uS \text{ common factor of }\uX \text{ and }\uX^*,\label{eq:sfiscf}
\end{align}
since it holds by the self-reciprocal propery of $\uS$  
\begin{align}
    \uX^*=\uS^*\uR^*=\uS\uR^*,
\end{align}
which proofs that $\uS$ is a common factor of $\uX$ and $\uX^*$. 
For the reverse we can only show this for the \emph{greatest common factor} (GCF).
%
\begin{lemma}\label{lem:gsfgcf}
  For $\uX\in\C[z]$ it holds
  \begin{align}
   \uG \text{ is GSF of }\uX \quad \LRA \quad \uG \text{ is GCF of  }\uX \text{ and }\uX^*
  \end{align}
\end{lemma}
%
\begin{proof}
  The ``$\RA$'' follows from \eqref{eq:sfiscf} since a GSF is trivially also a self-reciprocal factor of $\uX$. To see
  the other direction, we denote by $\uG$ the GCF of $\uX$ and $\uX^*$ which factorize as 
  \begin{align}
    \uX=\uG\uR \quad\text{and}\quad \uX^*=\uG\uQ\label{eq:xxstar},
  \end{align}
  where $\uR$ and $\uQ$ are the non-common factors of $\uX$ respectively $\uX^*$. Then it holds
  \begin{align}
    \uX^* =\uG^*\uR^* = \uG\uQ\label{eq:uxgrgq}.
  \end{align}
  Let us assume $\uG$ is not self-reciprocal, i.e., $\uG\not=\uG^*$, then we can still factorize $\uG$, as any
  polynomial,  in the \emph{greatest self-reciprocal factor} $\uS$ and a \emph{non-self-reciprocal factor} $\uN$. Note,
  it might also hold the trivial case $0\not=\uS=c\in\R$. Moreover, if the multiplicity of at least one zero in $\uS$,
  not lying on the unit circle, is larger than one, then $\uN$ might contain this zero. (if the corresponding
  conjugate-inverse zero is missing in $\uG$) It is clear, that $\uN$ can not contain more than $(D-T)/2$ such isolated
  factors, lets call the product of all them $\uS_1$. Hence $\uS_1$ is the GCF of $\uS$ and $\uN$ and we get
  $\uS=\uS_1\uS_2$ and $\uN=\uS_1\uN_2$. Then \eqref{eq:xxstar} becomes 
  \begin{align}
    \uX=\uS\uN\uR \quad \text{and}\quad  \uX^*=\uS\uN^*\uR^*\label{eq:snq}.
  \end{align}
  Let us denote by $\uX\subset\uY$ that $\uX$ is a factor of $\uY$ and by $\uX\cap\uY=\uC$ that $\uX$ and $\uY$
  have the common factor $\uC$ and if $\uC=1$ then they have no non-trivial common factor. Then we get with the above factorization
  \begin{align}
    \uG=\uS\uN\overset{!}{\subset}  \uS\uN^*\uR^*=\uS\uS_1^*\uN_2^*\uR^*
    \RA \uN\overset{!}{\subset} \uS_1^* \uN_2^*\uR^*
  \end{align}
  Then $\uS_1^*\cap \uN=1$,  since, if any non-trivial factor $\tuS_1^*\subset\uS_1^*$ would be a factor of $\uN$, then
  also $\tuS_1^*\subset\uS_1^*\subset \uN^*$ and hence $\tuS_1\subset \uN$ and therefore $\tuS_1\tuS_1^*\subset \uN$,
  which is a  non-trivial self-reciprocal factor and contradicts the definition of $\uN$. By the same reason
  $\uN_2^*\cap\uN=1$ since any non-trivial factor of $\uN_2^*$ would result in a non-trivial self-reciprocal factor of
  $\uN$ which is again a contradiction. Hence $\uN\subset \uR^*$, i.e., we have $\uR^*=\uN\uT$ which yields to 
  \begin{align}
    \uR=(\uR^*)^*=(\uN\uT)^* = \uN^* \uT^*.
  \end{align}
  On the other hand it holds also 
  \begin{align*}
    \uG\uQ \overset{\eqref{eq:uxgrgq}}{=}\uX^*\overset{\eqref{eq:snq}}{=}\uS\uN^*\uR^*
       =  \uS\uN^*\uN\uT =\uG\uN^*\uT  \quad \RA\quad \uQ=\uN^*\uT.
  \end{align*}
  Hence $\uN^*\subset\uR\cap\uQ$, but that is a contradiction, since $\uG$ is the GCF of $\uX$ and $\uX^*$. Hence the
  assumption is wrong and it must hold $\uG=\uG^*$. To see that $\uG$ is also the GSF, assume $\tuG$ would be
  self-reciprocal and contain $\uG$ as factor, then $\tuG$ would be by \eqref{eq:sfiscf} a common factor which is
  greater then $\uG$ and hence contradicts again with $\uG$ to be the GCF. 
\end{proof}
\begin{remark}
  Note, the other direction in \eqref{eq:sfiscf} is not always true. Take for example the polynomial $\uX=\uS\uS^*\uR$
  with $\uS=1-\zeta z^{-1}$ with $|\zeta|\not=1$. Then $\uX^*=\uS^*\uS\uR^*$ and clearly $\uS$ is a common factor but it
  is not even self-inversive since $\uS^*=z^{-1} (1-\cc{\zeta_k} z)= -\cc{\zeta}
  (1-\cc{\zeta^{-1}}z^{-1})\not=e^{i\alp}(1-\zeta z^{-1})$ for any $\alp\in\R$. 
\end{remark}

Let us define the \emph{anti-self-reciprocal polynomial} $\uA$ by the property $\uA=-\uA^*$. The trivial anti-self-reciprocal
factor is $i$. Hence, for any self-reciprocal factor $\uS$ we get by $\uA=i\uS$ an anti-self-reciprocal
factor. Hence if we factorize $\uX$ in the GSF $\uG$ and the rest factor $\uR$, we obtain with the identity $-i\cdot i=1$
the factorization
\begin{align}
  \uX=i\uG\tuR\quad,
\end{align}
where $\tuR=-i\uR$ does not contain self-reciprocal or anti-self-reciprocal factors. 
With this we can show the following result.
\begin{lemma}\label{lem:antiselfinversive}
  Let $L\geq 0$ and $\uX\in\C[z]$ be a polynomial of order $L$ with no infinite zeros and let $\uG,\uR$ be
  polynomials with $\uX=\uG\uR$, where $\uG$ is the GSF of degree $D\leq L$. Then the only non-zero
  polynomial $\uH$ of order less or equal $L$, which makes the product $\uX\uH^*$ anti-self-reciprocal, i.e., fulfills
  \begin{align}
    \uX\uH^* + \uX^*\uH=0\label{eq:xh}
  \end{align}
  is given by any self-reciprocal polynomial $\uS$ of degree $\leq D$ as $\uH=i\uR\uS$.
\end{lemma}
\begin{proof}
  Since $\uX$ factors in the GSF $\uG$ and a residual polynomial $\uR$ we have by \lemref{lem:gsfgcf} that $\uG$ is the
  GCF of $\uX$ and $\uX^*$ which gives $\uX^*=\uG\uR^*$.
  Let us use this factorization in \eqref{eq:xh} 
  \begin{align}
    \uG(\uR\uH^* +\uR^*\uH)=0 \LRA \uR\uH^* + \uR^*\uH=0
  \end{align}
  Since $\uR$ and $\uR^*$ do not have a common factor by definition of $\uG$, but have degree $L-D$, which is less or
  equal then $\uH$ and $\uH^*$ (degree $\leq L$), the only solution is of the form 
  \begin{align}
    \uH=i\uR \uS
  \end{align}
  where $\uS$ is any self-reciprocal polynomial of degree $\leq D$. 
\end{proof}

\fi

\subsection{Main Result}
Let us denote by $\C^{L}_{0,0}:=\set{\vx\in\C^L}{x_0\not=0\not=x_{L-1}}$.
Then in \cite[Thm.III.1]{JH16} and extended by the author (purely deterministic) it holds the following theorem. 
%

%
\begin{theorem}\label{thm:4correlation}
  Let $L_1,L_2$ be positive integers and $\vx_1\in\C_{0,0}^{L_1}, \vx_2\in\C_{0,0}^{L_2}$ such that their $z-$transforms
  $\uX_1(z)$ and $\uX_2(z)$ do not have any common factors. Then $\vx^T=(\vx_1^T,\vx_2^T)\in\C^N$ with $N=L_1+L_2$ can
  be recovered uniquely up to global phase from the measurement $\vb\in\C^{4N-4}$ defined  in \eqref{eq:3Nb} by solving
  the feasible convex program
  \begin{align}
    \find \vX\in\C^{N\times N}\quad\text{s.t.}\quad \begin{split}\Alin(\vX)=\vb\\ \vX\mgeq 0\end{split} \label{eq:fcp3N}
  \end{align}
  which has $\vX^{\#}=\vx\vx^*$ as the unique solution. 
\end{theorem}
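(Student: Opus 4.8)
The plan is to establish feasibility and uniqueness separately, the latter through an explicit dual certificate. Feasibility is immediate: by \eqref{eq:3Nb} the lifted matrix $\vx\vx^*$ satisfies $\Alin(\vx\vx^*)=\vb$ and $\vx\vx^*\mgeq 0$, so $\vX^{\#}=\vx\vx^*$ is admissible for \eqref{eq:fcp3N}. For uniqueness I would use the standard complementary-slackness mechanism: it suffices to exhibit a Hermitian $\vY$ in the range of the adjoint $\Alin^*$ with $\vY\mgeq 0$, $\vY\vx=\zero$ and $\rank\vY=N-1$. Indeed, writing $\vY=\Alin^*(\vlam)$, for every feasible $\vX$ one has $\tr(\vY\vX)=\skprod{\vlam}{\Alin(\vX)}=\skprod{\vlam}{\vb}=\tr(\vY\vx\vx^*)=\vx^*\vY\vx=0$; since $\vX,\vY\mgeq 0$ this forces $\vY\vX=0$, hence $\range\vX\subseteq\kernel\vY=\C\vx$, so $\vX=c\,\vx\vx^*$, and $\Alin(\vX)=\vb$ fixes $c=1$.

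The core of the proof is therefore the construction of $\vY$, which I would build from the rank-one matrix-valued symbol
\[
  \hat{\vY}(z)=\vv(z)\vv(z)^*,\qquad \vv(z)=\begin{pmatrix}\uX_2^*(z)\\ -\uX_1^*(z)\end{pmatrix},
\]
that is, $\vY$ is the Hermitian matrix whose four Toeplitz blocks carry the symbols $\uX_2\uX_2^*$, $-\uX_1\uX_2^*$, $-\uX_2\uX_1^*$ and $\uX_1\uX_1^*$. Each block is assembled (with the centering dictated by \eqref{eq:LiLj}) from the measured auto- and cross-correlations in \eqref{eq:3Nb}, so $\vY$ is a Hermitian combination of the sensing matrices $\vA_{i,j,k}$ of \eqref{eq:a11}--\eqref{eq:a21} and hence lies in $\range\Alin^*$. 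That $\vY\vx=\zero$ follows from the polynomial cancellation $\uX_2\uX_2^*\,\uX_1=\uX_1\uX_2^*\,\uX_2$, provided the index conventions make the truncated block products realize this identity exactly.

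Two spectral properties remain. For positivity, associate to a test vector $\vw=(\vw_1,\vw_2)$ the polynomials $p_1,p_2$ of degrees $<L_1,<L_2$ and write $\vp(\tht)=(p_1,p_2)^T$; the block-Toeplitz quadratic form then admits the Fejér--Riesz integral representation $\vw^*\vY\vw=\frac{1}{2\pi}\intpi \vp(\tht)^*\hat{\vY}(\tht)\vp(\tht)\,\mathrm d\tht\geq 0$, because $\hat{\vY}(\tht)=\vv(\tht)\vv(\tht)^*\mgeq 0$ pointwise. Crucially, this quadratic form references only correlation lags within $\pm(L_i-1)$, so the full positive symbols (never a truncation) enter, and $\vY\mgeq 0$ follows. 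For the kernel, the same representation gives $\vw^*\vY\vw=0$ iff $\vv(\tht)^*\vp(\tht)\equiv 0$, i.e. $\uX_2\,p_1=\uX_1\,p_2$ up to the unit monomial introduced by the involution; since $\uX_1,\uX_2$ share no common factor and $\deg p_i\le L_i-1$, this forces $p_1=c\,\uX_1$, $p_2=c\,\uX_2$, whence $\kernel\vY=\C\vx$ and $\rank\vY=N-1$.

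The hard part is to make these properties hold simultaneously and rigorously, and this is where the self-reciprocal factorization developed above is needed. Reconciling the shift conventions of \eqref{eq:LiLj} so that the measured correlations assemble into a genuine rank-one symbol $\vv\vv^*$ is delicate bookkeeping, and --- more substantively --- verifying that no other Hermitian block-Toeplitz annihilator of $\vx$ can inflate the kernel requires characterizing all self-reciprocal symbols $\uY_{11}$ with $\uX_2\mid\uY_{11}$. When $\uX_2$ has zeros on the unit circle, $\uX_2$ and $\uX_2^*$ share those zeros and divisibility cannot be read off naively; here \lemref{lem:gsfgcf} (GSD $=$ GCD of $\uX$ and $\uX^*$) and \lemref{lem:antiselfinversive} (the solutions of $\uX\uH^*+\uX^*\uH=0$ are exactly $\uH=i\uR\uS$ with $\uS$ self-reciprocal of degree $\le\deg\uG$) pin down the admissible certificates: the free self-reciprocal factor $\uS$ parametrizes the whole family of block-Toeplitz matrices vanishing on $\vx$, and positivity together with the rank count single out $\uS$ of degree $0$, giving the desired rank-$(N-1)$ certificate. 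Feeding this $\vY$ into the complementary-slackness argument of the first paragraph then yields $\vX^{\#}=\vx\vx^*$ as the unique solution of \eqref{eq:fcp3N}, proving \thmref{thm:4correlation}.
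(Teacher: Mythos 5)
Your proposal is essentially correct, and its uniqueness mechanism is genuinely different from (and shorter than) the paper's. Your certificate is the same object as the paper's: the block matrix with symbol $\vv\vv^*$, $\vv=(\uX_2,-\uX_1)^T$ (up to the involution conventions), is exactly $\vW=\vS^*\vS$ for the Sylvester matrix $\vS=\vS_{-\vx_1^0,\vx_2^0}$, whose blocks the paper identifies with the measured auto- and cross-correlations in \appref{app:dualcerti}; your Fej\'er--Riesz integral representation of $\vw^*\vY\vw$ is the frequency-domain version of $\|\vS\vw\|^2$, and the kernel computation via coprimality is Sylvester's rank theorem. Where you diverge is in how the certificate is used. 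The paper follows \lemref{lem:dualcertifrank2}: the certificate only kills the component $\vH_{T_\vx^\perp}$, and a separate injectivity condition on the tangent space $T_\vx$ must then be verified --- this is precisely where \lemref{lem:gsfgcf} and \lemref{lem:antiselfinversive} enter, via the system \eqref{eq:x1h1}--\eqref{eq:x1h2}. You instead apply complementary slackness to the \emph{full} feasible $\vX$: $\tr(\vY\vX)=0$ with both matrices positive semidefinite forces $\vY\vX=\zero$, hence $\range\vX\subseteq\kernel\vY=\C\vx$ and $\vX=c\,\vx\vx^*$ with $c=1$. That argument is sound, and given an exact rank-$(N-1)$ certificate in $\range\Alin^*$ it renders the tangent-space condition --- and with it the entire self-reciprocal factorization machinery --- unnecessary for this theorem. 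What it buys is brevity; what it costs is that the self-reciprocal analysis (which also explains \emph{why} both autocorrelations are needed, cf.\ the remark after the proof) is no longer exercised.

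Two caveats. First, the step you defer as ``delicate bookkeeping'' --- checking that the truncated block-Toeplitz correlation matrix is literally $\vS^*\vS$ and lies in $\range(\tAlin^*)$ as defined in \eqref{eq:dualcertificate} --- is a real piece of work (the paper's \eqref{eq:sylvesterbanded}--\eqref{eq:appvlam}), though it does go through. Second, your closing paragraph misdiagnoses what remains: you do not need to ``characterize all Hermitian block-Toeplitz annihilators of $\vx$'' --- a single certificate with the three stated properties suffices for your argument --- and the kernel identification $\uX_2 p_1=\uX_1 p_2\Rightarrow p_i=c\,\uX_i$ follows from unique factorization in $\C[z]$ and the degree bounds alone, with no special treatment of unit-circle zeros. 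The appeal to \lemref{lem:gsfgcf} and \lemref{lem:antiselfinversive} there is a remnant of the paper's tangent-space route and plays no role in yours; as written it muddies an otherwise clean argument.
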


\begin{remark}
  The condition that the first and last coefficient does not vanish, guarantee that $\uX_i$ and $\uX_i^*$ have no  zeros
  at the origin and are both of degree $L_i$. Since the correlation is conjugate-symmetric, we only need to measure one cross
  correlation, since we have $\vx_1*\cc{\vx^-_2}=(\cc{\vx_2*\cc{\vx^-_1}})^{-}$. Hence we can omit the last $N-1$
  measurements in $\vb$ and achieve recovery from only $3N-3$ measurements. In fact, if we set $\vtx_2=\cc{\Rvxtwo}$ and
  demand $\uX_1$ and $\tuX_2^*=\uX_2$  to be co-prime, then the theorem gives recovery up to global phase of $\vx_1$ and
  $\vtx_2$ from  its convolution
  \begin{align}
    \vx_1*\vtx_2
  \end{align}
  by knowing additionally the auto-correlations $\va_{1,1}$ and $\vta_{2,2}$, since it holds by conjugate-symmetry of the
  autocorrelations, that
  $\va_{2,2}=\vx_2*\cc{\Rvxtwo}=\cc{\vtx_2^-}*\vtx_2 =\cc{(\vtx_2*\cc{\vtx_2^-})^-}=\vtx_2*\cc{\vtx_2^-}=\vta_{2,2}$. 
\end{remark}

\if0 
\begin{theorem}\label{thm:4correlation}
  Let $\vx_1\in\C_{0,0}^{L_1}$ and $\vx_2\in\C_{0,0}^{L_2}$ such that the $z-$transforms  $\uX_1(z)$ and $\uX_2(z)$ do not have any
  common factors. Then $\vx^T=(\vx_1^T,\vx_2^T)\in\C^N$ with $N=L_1+L_2$ can be recovered
  uniquely up to global phase from the measurement $\vb$ defined  in \eqref{eq:4maskmeasurements} by
  solving the feasible convex program
  \begin{align}
    \find \vX\in\C^{N\times N}\quad\text{s.t.}\quad \begin{split}\Alin(\vX)=\begin{pmatrix}\vb\\
        \vR\cc{\vb_3}\end{pmatrix}\\ \vX\geq 0\end{split} \label{eq:fcp3N}
  \end{align}
  which has $\vX^{\#}=\vx\vx^*$ as the unique solution. 
\end{theorem}
\begin{remark}
  Since $L_1,L_2$ are known, also $\vx_1$ and $\vx_2$ can be recovered up to global phase.  The condition that the first
  and last coefficient does not vanish, establish that $\uX_i$ and $\uX_i^*$ have no zeros at infinity and are both of
  degree $L_i$. Since the correlation is conjugate-symmetric, we only need to measure one cross correlation, since we
  have $\vx_1\triangle\vx_2=\vR(\cc{\vx_2\triangle\vx_1})$. Hence we can omit the last $N-1$ measurements in $\vb$ and
  achieve recovery from only $3N-3$ measurements. 
\end{remark}
\fi

\begin{proof}[Proof of Theorem \ref{thm:4correlation}]
In \cite{JH16} the authors could show that the feasible convex program is solvable by constructing a unique dual certificate
which implies to show the uniqueness condition only on the tangent space $T_{\vx}$ of $\vx\vx^*$, for a detailed proof
see \appref{sec:prooflemma}.
\begin{lemma}\label{lem:dualcertifrank2}
  The feasible convex problem given in \eqref{eq:fcp3N} has the unique  solution $\vX^{\#}=\vx\vx^*$ if the following
  conditions are satisfied
  \begin{enumerate}[1)]
    \item There exists a dual certificate $\vW\in \Alin^*$ such that \label{enu:dualcertif}
      \begin{enumerate}[(i)]
        \item $\vW\vx=\zero$\label{enu:zero1}
        \item $\rank(\vW) =N-1$ \label{enu:rank1}
        \item $\vW\mgeq 0$ \label{enu:positive1}
      \end{enumerate}
      \item For all $ \vH\in T_{\vx}:=\set{\vx\vh^* + \vh\vx^*}{\vh\in\C^N}$ it holds\label{enu:nullset}
        \begin{align} \Alin(\vH)=\zero \RA \vH=\zero.
        \end{align}
  \end{enumerate}
\end{lemma}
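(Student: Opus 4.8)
The plan is to run the textbook dual-certificate argument for positive-semidefinite recovery: the certificate $\vW$ will confine every feasible matrix to the ray spanned by $\vx\vx^*$, and the injectivity condition~2) will then remove the remaining scalar degree of freedom. First I would fix an arbitrary feasible point, i.e. a Hermitian $\vX\mgeq 0$ with $\Alin(\vX)=\vb$, and set $\vH:=\vX-\vx\vx^*$. Since $\vx\vx^*$ is itself feasible (this is how $\vb$ is defined in \eqref{eq:3Nb}), the difference $\vH$ is Hermitian and satisfies $\Alin(\vH)=\zero$. Because the certificate lies in the range of the adjoint, $\vW=\Alin^*(\vy)$ for some $\vy$, so the adjoint identity gives
\begin{align}
  \tr(\vW\vH)=\skprod{\vy}{\Alin(\vH)}=0.
\end{align}

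Next I would exploit complementary slackness. From $\vW\vx=\zero$ I get $\tr(\vW\vx\vx^*)=\vx^*\vW\vx=0$, and combining this with the previous display yields
\begin{align}
  \tr(\vW\vX)=\tr(\vW\vx\vx^*)+\tr(\vW\vH)=0.
\end{align}
Writing $\tr(\vW\vX)=\tr(\vW^{1/2}\vX\vW^{1/2})$ with $\vW^{1/2}\vX\vW^{1/2}\mgeq 0$, the vanishing trace forces $\vW^{1/2}\vX\vW^{1/2}=\zero$, hence $\vW\vX=\zero$ and $\range(\vX)\subseteq\kernel(\vW)$. The rank condition $\rank(\vW)=N-1$ makes $\kernel(\vW)$ one-dimensional, and $\vW\vx=\zero$ places $\vx$ in it, so $\kernel(\vW)=\spann\{\vx\}$. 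Thus every column of $\vX$ is a multiple of $\vx$, i.e. $\vX=\vx\vu^*$ for some $\vu\in\C^N$; Hermiticity forces $\vu$ to be a scalar multiple of $\vx$ and $\vX\mgeq 0$ makes that scalar real and nonnegative, so $\vX=c\,\vx\vx^*$ with $c\geq 0$.

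Finally I would pin the constant down. The residual $\vH=(c-1)\vx\vx^*$ lies in $T_{\vx}$ (take $\vh=\tfrac{c-1}{2}\vx$) and satisfies $\Alin(\vH)=\zero$, so condition~2) gives $\vH=\zero$, i.e. $c=1$ and $\vX=\vx\vx^*$; uniqueness follows. The only genuinely delicate point is the complementarity step $\tr(\vW\vX)=0\Rightarrow\vW\vX=\zero$ for $\vW,\vX\mgeq 0$, and conceptually the crux is the division of labor between the two hypotheses: the certificate alone only traps a solution on the ray through $\vx\vx^*$, and it is precisely the tangent-space injectivity that kills the surviving scalar. Everything else is bookkeeping with the three certificate properties.
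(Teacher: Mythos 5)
Your proof is correct, and it takes a genuinely leaner route than the paper's. The paper splits $\vH=\vX-\vx\vx^*$ into tangential and normal components $\vH_{T_{\vx}}+\vH_{T_{\vx}^{\bot}}$, proves from feasibility of $\vX$ that $\vH_{T_{\vx}^{\bot}}\mgeq 0$ and $\vH_{T_{\vx}^{\bot}}\vx=\zero$, then uses $\tr(\vW\vH_{T_{\vx}^{\bot}})=0$ together with $\rank(\vW)=N-1$ to force $\vH_{T_{\vx}^{\bot}}=\zero$, and finally invokes condition 2) in full strength to kill the remaining tangential part $\vx\vh^*+\vh\vx^*$ for arbitrary $\vh$. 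You instead apply complementary slackness to the feasible matrix $\vX$ itself: $\tr(\vW\vX)=0$ with both factors positive semidefinite pins $\range(\vX)$ inside the one-dimensional $\kernel(\vW)=\spann(\vx)$, so $\vX=c\,\vx\vx^*$ outright, and condition 2) is then needed only on the real multiples of $\vx\vx^*$ rather than on all of $T_{\vx}$. (Indeed, linearity already gives $c=1$ from $\Alin(c\vx\vx^*)=c\vb=\vb$ whenever $\vb\neq\zero$, so your argument shows that the certificate conditions (i)--(iii) very nearly suffice on their own for this exact, noise-free feasibility problem.) What the paper's tangent-space decomposition buys is the standard template that survives when the certificate is only approximate or a nontrivial objective is present, in which case one cannot conclude $\tr(\vW\vX)=0$ so cheaply and must control the two components of $\vH$ separately. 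The one step you should state more carefully is $\tr(\vW\vH)=0$: the sensing matrices $\vA_m$ here are not Hermitian, so one should note that $\Alin(\vH)=\zero$ together with $\vH=\vH^*$ gives both $\tr(\vA_m\vH)=0$ and $\tr(\vA_m^*\vH)=0$, whence every $\vW=\sum_m(\lam_m\vA_m+\cc{\lam_m}\vA_m^*)$ in the range of the adjoint annihilates $\vH$ in trace; with that observation your argument is complete.
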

Indeed, the conditions in \enuref{enu:dualcertif} are satisfied  for the dual certificate  $\vW:=\vS^*\vS$, where $\vS$
is the $N\times N$ Sylvester matrix of the polynomials $-z^{L_1}\uX_1$ and $z^{L_2}\uX_2$ given in
\eqref{eq:sylvestermatrix}.  To see the first two conditions \enuref{enu:zero1} and \enuref{enu:rank1}, we use the
Sylvester \thmref{thm:sylvesterrank} in \appref{app:sylvester}, which states, that the only non-zero vector in the
one-dimensional nullspace of the Sylvester matrix $\vS$ is given by $\vx^T=(\vx_1^T,\vx^T_2)$  (up to scalar), i.e., by
\eqref{eq:sylvesternullspace} we have
\begin{align}
  \vS\vx= \begin{pmatrix}\vx_2*\vx_1 - \vx_1*\vx_2 \\ 0\end{pmatrix} =\zero_N\label{eq:crossconvolutionzero},
\end{align}
where the difference of the cross-convolutions vanishes due to the commutation property of the convolution. Since the
dimension of the nullspace is $1$ we have $\rank(\vS)=N-1$, which shows  \enuref{enu:rank1} with
$\rank(\vS^*\vS)=\rank(\vS)$.  The positive-semi-definitness in \enuref{enu:positive1} is given by definition of
$\vW=\vS^*\vS$, since for any matrix $\vA\in\C^{N\times N}$ it holds $\vA^*\vA\mgeq 0$. To see that $\vW=\vS^*\vS$ is in
the range of $\Alin^*$, we have to set $\vlam$ accordingly , since $\vS^*\vS$ corresponds to the correlations of
the $\vx_i$ and $\vx_j$  by \eqref{eq:sylvesterbanded} to \eqref{eq:appvlam}, it turns out that
the $\vlam$ can be decompose in terms of our four measurements, see  \appref{app:dualcerti}.

Hence, it remains to show the uniqueness condition \enuref{enu:nullset} in \lemref{lem:dualcertifrank2}, which is the
new result in this work. For that we have to show for any $\vH\in T_{\vx}$ given by $\vH=\vx\vh^*+\vh\vx^*$ for some
$\vh\in\C^N$ that it follows $\vH=\zero$ from
\begin{align}
  \Alin(\vH)=\Alin(\vx\vh^*) + \Alin(\vh\vx^*)=\zero.\label{eq:hzeroalin}
\end{align}
Here $\vH$ produce a sum of different correlations which have to vanish.  As we split $\vx$ in $\vx_1$ and $\vx_2$  we can also split
$\vh$ in $\vh_1$ and $\vh_2$.  Then we can use the block structure in $\vx\vh^*$ and $\vh\vx^*$ to split condition
\eqref{eq:hzeroalin} in
\begin{align}
  \Alin(\vH)= 
    \begin{pmatrix} 
    \vx_1 * \cc{\vh_1^-}\\
    \vx_2 *\cc{\vh_2^-}\\
    \vx_1 *\cc{\vh_2^-}\\
    \vx_2 *\cc{\vh_1^-}\\
    \end{pmatrix}
    +
    \begin{pmatrix}  
      \cc{\vx_1^-} *\vh_1\\
      \cc{\vx_2^-} *\vh_2\\ 
      \cc{\vx_2^-} *\vh_1   \\
      \cc{\vx_1^-}*\vh_2 
    \end{pmatrix}=\zero\label{eq:corrsums}.
\end{align}
Let us translate the four equations in \eqref{eq:corrsums} to the $z-$domain: \seeintern{\eqref{eq:polycorr}}
\begin{align}
  \uX_1\uH_1^* +\uX_1^*\uH_1 &=0\label{eq:x1h1}\\
  \uX_2\uH_2^* +\uX_2^*\uH_2 &=0\label{eq:x2h2}\\
  \uX_1\uH_2^* +\uX_2^*\uH_1 &=0\label{eq:x1h2},
\end{align}
where we ommited the last one, which is redundant to \eqref{eq:x1h2}.
Let us assume, $\uX_1=\uG_1\uR_1$ and $\uX_2=\uG_2 \uR_2$ where $\uG_1$ and $\uG_2$ are the GSDs of $\uX_1$ respectively
$\uX_2$ and $\uR_1$ and $\uR_2$ their co-factors, then we can find by \lemref{lem:antiselfinversive} self-reciprocal factors $\uS_1$ and $\uS_2$ such that
$\uH_1=i\uR_1\uS_1$ and $\uH_2=i\uR_2\uS_2$ which are the only solutions for \eqref{eq:x1h1} and \eqref{eq:x2h2}.
But, then it follows for the second equation \eqref{eq:x1h2}, 
\begin{align}
  0 &= \uX_1 \uH_2^* + \uX_2^*\uH_1  = -i\uG_1 \uR_1 \uR_2^* \uS_2 + i\uG_2 \uR_2^* \uR_1 \uS_1 
  \quad\LRA \quad  \uG_1 \uR_1 \uR_2^* \uS_2= \uG_2 \uR_1 \uR_2^* \uS_1\quad 
  \LRA \quad  \uG_1\uS_2 = \uG_2\uS_1\label{eq:crosszcondition}
\end{align}
\if0 
Here, only the self-reciprocal polynomials  $\uS_1$ and $\uS_2$ of degree $\leq D_1$ respectively $D_2$ can be chosen
freely.  Hence $\uS_1$ and $\uS_2$ can not be the inversive of $\uG_1\uR_1\uR_2$ respectively $\uG_2\uR_2^*\uR_1^*$,
since this would imply infinite degree for $\uS_1$ and $\uS_2$.  Since $\uX_1$ and $\uX_2$ do not have a common factor
also $\uG_1$ and $\uG_2$ do not have a common factor.  Hence $\uS_1$ must contain $\uG_1$ and $\uS_2$ must contain
$\uG_2$. But since $\uR_1$ and $\uR_2$ are not self-reciprocal, we can not add them to the self-reciprocal polynomials
$\uS_1$ respectively $\uS_2$ unless they are trivial self-reciprocal, i.e., $\uR_1=c_1,\uR_2=c_2$, which would give
$\uS_1=\uG_1$ and $\uS_2=\uG_2$ as only solution of \eqref{eq:crosszcondition} (only if $D_1=D_2$).  This would imply
$\uH_1=ic_1\uG_1=i\uX_1$ and $\uH_2=ic_2\uG_2=i\uX_2$ which gives $\vh=i\vx$ and result in
\fi
Here, only the self-reciprocal polynomials  $\uS_1$ and $\uS_2$ of degree $\leq D_1$ respectively $\leq D_2$ can be
chosen freely. Since $\uX_1$ and $\uX_2$ do not have a common factor also $\uG_1$ and $\uG_2$ do not have a common
factor and the $D\times D$ Sylvester matrix $\vS_{z^{-1}\uG_1,-z^{-1}\uG_2}$ has rank $D_1+D_2-1=D-1$ and again as in
\eqref{eq:sylvesterpoly} the only solutions for \eqref{eq:crosszcondition} are given by $\uS_1=c\uG_1$ respectively
$\uS_2=c\uG_2$ for some $c\in\R$ (note, $\uS_1$ and $\uS_2$ must be self-reciprocal, hence only real units). This
would imply $\uH_1=ic\uR_1\uG_1=ic\uX_1$ and $\uH_2=ic\uR_2\uG_2=ic\uX_2$, which gives $\vh=ic\vx$ and result in
\begin{align}
  \vH=\vx\vh^*+\vh\vx^*=-i c\vx \vx^* + ic\vx\vx^*=\zero. \label{eq:hzerovx}
\end{align}
\end{proof}

\begin{remark}
  To guarantee a unique solution of \eqref{eq:fcp3N} we need both auto-correlations, since only then we obtain both
  constraints \eqref{eq:x1h1} and \eqref{eq:x2h2}, which yielding to the constraints $\uH_1=i\uR_1^*\uS_1$ and
  $\uH_2=i\uR_2^*\uS_2$. If one of them is missing, we can construct by \eqref{eq:crosszcondition} non-zero $\vH$'s satisfying
  \eqref{eq:hzeroalin}, and hence violating the uniqueness in \enuref{enu:nullset}. 
\end{remark}


\section{Simulation and Robustness} 

If we obtain only noisy correlation measurements, i.e., disturbed by the noise vectors $\vn_{1,1},\vn_{2,2},\vn_{1,2}$ as
\begin{align}
  \vb=\vy+\vn=\begin{pmatrix}
    \va_{1,1}\\ \va_{2,2} \\ \va_{1,2}\\ \va_{2,1} \end{pmatrix}
  +\begin{pmatrix}\vn_{1,1} \\ \vn_{2,2} \\ \vn_{1,2}\\ \cc{\vn_{1,2}^-}\end{pmatrix},
\end{align}
we can search for the least-square solutions in \eqref{eq:fcp3N}, given as
\begin{align}
  \vX^{\#}:=\argmin_{\vX\mgeq 0} \Norm{\vb-\Alin(\vX)}_2^2.
\end{align}
Extracting form $\vX^{\#}$ via SVD the best rank$-1$ approximation $\vx^{\#}$ gives the normalized MSE of the reconstruction
\begin{align}
  \text{MSE}:=\min_{\phi\in[0,2\pi)} \frac{\Norm{\vx-e^{i\phi}\vx^{\#}}^2}{\Norm{\vx}^2}.
\end{align}
We plotted the normalized MSE in \figref{fig:simulation} over the received SNR (rSNR), given by
\begin{align}
  \text{rSNR}:= \frac{\Expect{\Norm{\vy}_2^2}}{\Expect{\Norm{\vn}_2^2}}.
\end{align}
Since, the noise is i.i.d. Gaussian we get for rSNR $:=\Expect{\Norm{\vy}_2^2}/(N\sigma^2)$ where $\sigma$ is the noise
variance.

Surprisingly, the least-square solution $\vX^{\#}$ seems also to be the smallest rank solution, i.e., numerically a
regularization with the trace norm of $\vX$, to promote a low-rank solution, does not yield to better results or even
lower rank solutions.  Although, the authors can not give an analytic stability result of the above algorithm, the
reconstruction from noisy observations gives reasonable stability, as can be seen in \figref{fig:simulation}.  Here, we
draw $\vxone$ and $\vxtwo$ from an i.i.d. Gaussian distribution with unit variance. If the magnitude of the first or
last coefficients is less than $0.1$ we dropped them from the simulation trial, this ensures full degree polynomials, as
demanded in the \thmref{thm:4correlation}.  As dimension grows, computation complexity increase dramatically and
stability decreases significant. Nevertheless, the MSE per dimension scales nearly linear with the noise power in dB.
Noticeable is the observation, that unequal dimension partitioning of $N$ yields to a better performance. 

\begin{figure}
  \centering
  \includegraphics[scale=0.54]{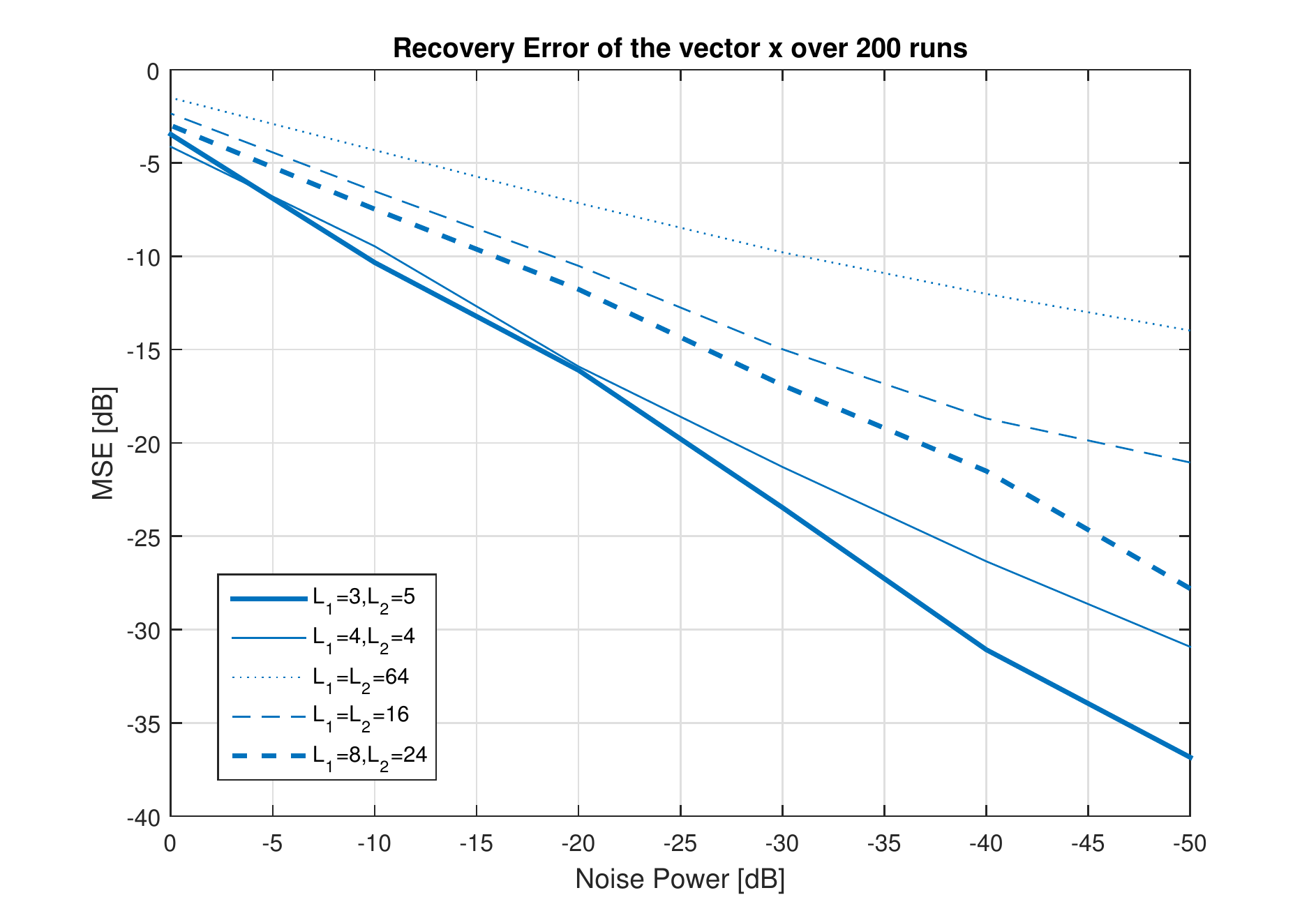}
  \caption{MSE/dim in dB for deconvolution of $\vxone\in\C^{L_1}$ and $\vxtwo\in\C^{\Ltwo}$ for various dimensions with
   additive Gaussian noise on the convolution and autocorrelations.}\label{fig:simulation}
\end{figure}

\section{Conclusion} 
We characterized the ambiguities of convolution by exploiting their polynomial factorizations. 
As an application we could derandomize a $4N-4$ auto and cross-correlation setup in \cite{JH16} by only
assuming a co-prime structure in $\vx$ and full degrees of the polynomials.
Moreover, we can provide a convex recovery algorithm which numerically also performs robust against additive noise.  

{\it Acknowledgments.} We would like to thank Kishore Jaganathan, Fariborz Salehi and Michael Sandbichler for helpful
discussions. A special thank goes to Richard Küng for discussing the dual certificate construction in dual problems in more
detail. This work was partially supported by the DFG grant JU 2795/3 and WA 3390/1. We also like to thank the Hausdorff
Institute of Mathematics for providing  for some of the authors resources at the Trimester program in spring 2016 on
``Mathematics of Signal Processing'' where part of the work have been prepared.

\appendices

\section{Proof of Lemma 3}\label{sec:prooflemma}

Usually, in the math literature, SDP problems are formulated for symmetric objects on symmetric cones over the real field
$\R$. This is due to the fact that minimizing or maximizing an objective function is only possible for real-valued
functions. Nevertheless, there is an extension to the complex case, which is sometimes called \emph{complex SDP
problems}. \seefor{ \cite[Sec.3]{GW04b}}
Let $\Alin:\C^{N\times N} \to \C^M$ be a linear map given by sensing matrices
$\{\vA_m\}_{m=0}^{M-1}\subset\C^{N\times N}$ (not necessarily Hermitian or symmetric). Moreover, we define the linear
objective function $\tr(\vC\vX)$ by a Hermitian
matrix $\vC\in H_N:=\set{\vA\in\C^{N\times N}}{\vA=\vA^*}$. Then the \emph{primal complex optimization problem} is given by
\begin{flalign}
  && 
  \min_{\vX\in\C^{N\times N},\vX\mgeq 0} \tr(\vC\vX) \quad \text{such that}\quad \Alin(\vX)=\vb &&
   \label{eq:primalC}
\end{flalign}
But $\Alin$ is not convex, since $\vb$ is not real-valued, if $\vA_m$ is not Hermitian. To obtain convex conditions,
we can just split imaginary and real part of $\vb$ by setting
\begin{align}
  \Alin_{R,m}(\vX)&=\frac{\Alin_m(\vX)+\cc{\Alin_m(\vX)}}{2}=\frac{\tr((\vA_m+\vA_m^*)\vX)}{2}
                    =\tr(\vA_{R,m}\vX)=\frac{b_m\!+\!\cc{b_m}}{2}=\Re(b_m),\\
  \Alin_{I,m}(\vX)&=\frac{\Alin_m(\vX)-\cc{\Alin_m(\vX)}}{2i}=\frac{\tr((\vA_m-\vA_m^*)\vX)}{2i}
                    =\tr(\vA_{I,m}\vX)=\frac{b_m\!-\!\cc{b_m}}{2i}=\Im(b_m),
\end{align}
for all $m\in [M]$. Hence, we yield $2M$ real-valued convex  measurements $\tAlin$ with the Hermitian sensing matrices $\vA_{I,m}$ and
$\vA_{R,m}$. This gives finally the equivalent \emph{primal complex convex optimization problem (primal complex SDP problem)}
\begin{flalign}
  &&\min_{\vX\mgeq 0} \tr(\vC\vX) \text{ such that}\quad \tAlin(\vX)=\begin{pmatrix}\Alin_{R}(\vX) \\
    \Alin_{I}(\vX)\end{pmatrix}= \begin{pmatrix}\Re(\vb)\\ \Im(\vb)\end{pmatrix}.&&\label{eq:primalcomplexSDP}
\end{flalign}
This complex SDP can be rewritten as a standard SDP over real-valued positive-semidefinite matrices in $S_N$, see for
example \cite[Sec.4]{GW04}. We therefor can assume the duality properties of the real SDP problems for the complex SDP
as well.
The \emph{dual convex optimization problem} is then given by
\begin{align}
  \max_{\vc,\vd\in\R^{M}} \skprod{\begin{pmatrix}\vc\\ \vd\end{pmatrix}}{\begin{pmatrix}\Re(\vb)\\\Im(\vb)\end{pmatrix}}
  \quad\text{s.t.}\quad\underbrace{\sum_{m=0}^{M-1}
  c_m \vA_{R,m} \!+\! d_m\vA_{I,m}}_{\in\range(\tAlin^*)}+ \vS = \vC, \vS\mgeq 0 \label{eq:dualC}.
\end{align}
\seefor{\cite[Sec.5]{Fre04b}}%
If $\vC=\zero$ the primal optimization problem \eqref{eq:primalC}  becomes a \emph{primal feasible
  problem\index{problem!primal feasible}} since any $\vX$ would yield the same objective value zero, which is equivalent
  to no objective function and hence to a \emph{primal complex feasible SDP problem}: 
\begin{flalign}
  &&  \text{find }\vX\mgeq 0 \quad \text{such that}\quad \tAlin(\vX)=\begin{pmatrix}\Re(\vb)\\ \Im(\vb)\end{pmatrix},
  &&\label{eq:feasiblepcp}
  \intertext{which is equivalent to}
  && \text{find }\vX\mgeq 0 \quad\text{such that}\quad \Alin(\vX)=\vb.  && \label{eq:feasiblepp}
\end{flalign}
Then the \emph{dual complex feasible problem\index{problem!dual feasible}} is given by, \cite[Sec.VI (12)]{JH16},
\begin{flalign}
  &&  \max_{\vlam\in\C^M}- \sum_{m} (\lam_m b_m + \cc{\lam_m}\cc{b_m} )\quad \text{such that}\quad \sum_{m} (\lam_m
  \vA_m + \cc{\lam_m}\vA_m^*)\mgeq 0, &&\label{eq:feasibledp}
\end{flalign}
which can be obtained by setting $-2\lam_m=c_m -id_m$ and $\vC=\zero$ in \eqref{eq:dualC}, since it holds 
\begin{align}
  -(\lam_m b_m+\cc{\lam_m b_m})&=\frac{c_m-id_m}{2}(\Re(b_m)+i\Im(b_m))+ \frac{c_m +id_m}{2}(\Re(b_m)-i\Im(b_m))\\
  &=c_m \Re(b_m) +d_m\Im(b_m)
\intertext{and}
\lam_m \vA_m +\cc{\lam_m}\vA_m^*&= (c_m\vA_m -id_m\vA_m +c_m\vA_m^* +id_m\vA_m^* )/2 =c_m \vA_{R,m} + d_m\vA_{I,m}.
\end{align}
The set of matrices 
\begin{align}
  \range(\tAlin^*)=\set{\vW=\sum_{m=0}^{M-1} (\lam_m \vA_m + \cc{\lam_m}\vA_m^*)}{\vlam\in\C^M}\label{eq:dualcertificate},
\end{align}
is the range space of $\tAlin^*$, which is indeed the
set of Hermitian matrices spanned by the Hermitian sensing matrices $\vA_{R,m}$ and $\vA_{I,m}$. Note, the real dimension is
less or equal to $2M$.

We will now  proof the central lemma for the uniqueness of the complex SDP program used in  \thmref{thm:4correlation}.
  
\begin{proof}[Proof of \lemref{lem:dualcertifrank2}]
  Note, that we have the equivalence
  \begin{align}
    \tAlin(\vX)=\begin{pmatrix}\Re(\vb)\\ \Im(\vb)\end{pmatrix} \quad\LRA \quad \Alin(\vX)=\vb
  \end{align}
  and therefor the range is equal, i.e., $\range(\Alin^*)=\range(\tAlin^*)$.
  One can insert the problem \eqref{eq:feasiblepp} directly  into Matlab with {\ttfamily cvx} toolbox, since it will be
  interpreted as the convex problem \eqref{eq:feasiblepcp} with real-valued constraints. We will use the version \eqref{eq:feasiblepp}
  since it is more natural for the proof.
  Let us assume $\vX^{\#}\mgeq 0$ is a feasible solution of the
  primal problem \eqref{eq:feasiblepp}, i.e., 
\begin{enumerate}[(a)]
  \item $\forall m\in [M]\colon \tr(\vA_m \vX^{\#})=b_m$ \label{enu:primalsol} 
\end{enumerate}
If we can show that $\vX^{\#}=\vx\vx^*$ is the only feasible solution then we have shown the unique solution.
Let us further assume  $\vlam\in\C^M$ is a solution of the dual complex feasible problem \eqref{eq:feasibledp}, i.e.,
\begin{enumerate}[(b)]
  \item $\vW=\sum_m \lam_m \vA_m + \cc{\lam_m}\vA_m^*\mgeq 0$\label{enu:dcpositive}
\end{enumerate}
then by the KKT conditions, strong duality see for example \cite[Thm.5.1]{Fre04b}, the solutions are the same if the
\emph{duality gap} is zero\footnote{This works with every $\vC\in H_N$ defining the  objective function $\tr(\vC\vX)$
(Note, that the dual certificate $\vW$ has to be also include $\vC$. For the feasible problem we have $\vC=\zero$},
i.e.
\begin{enumerate}[3.]
  \item $\tr(\vW\vX^{\#})=0\quad$ (Complementary slackness)\label{enu:slackness}
\end{enumerate}
By definition, $\vx\vx^*$ is a primal feasible solution. If we can construct a dual certificate $\vW$, which satisfy
\enuref{enu:dcpositive}, and which fulfills \enuref{enu:slackness}, then $\vx\vx^*$ is an optimal solution. Since
\eqref{eq:feasiblepcp} is a feasible problem every feasible solution is an optimal solution ($\vC=\zero$). But then for
every primal feasible solution $\vX^{\#}$ there must exists a dual certificate $\vW$ satisfying the slackness property
\enuref{enu:slackness}. We will use this condition to relax the  uniqueness condition. To ensure uniqueness of the
primal feasible solution $\vx\vx^*$ we have to show that no other primal feasible (optimal) solution $\vX^{\#}\mgeq0$
exist. This is equivalent to show that for any $\vX^{\#}=\vx\vx^*+\vH\mgeq 0$ given by any $\vH\in H_N$ it holds
\begin{align}
  & \Alin(\vX^{\#})=\vb \RA \vX^{\#}=\vx\vx^*,\label{eq:uniquecondorig}
\end{align}
which is by linearity of $\Alin$ equivalent to
\begin{align}
  \forall \vH\in H_N\text{ with }\vx\vx^*+\vH\mgeq 0 \text{ it holds: }\Alin(\vH)=\zero \RA \vH=\zero \label{eq:hzero}.
\end{align}
  \if0 If in addition a dual certificate $\vW$ exists which fulfills the slackness property, then $\vx\vx^*$ is the
  unique (optimal) solution.  \fi
  \if0 In \cite[Sec.VI]{JH16} the  existence of a dual certificate $\vW\in\range(\Alin^*)$ was shown  which fulfills
  \enuref{enu:slackness} and \enuref{enu:dcpositive} for every primal solution $\vX^{\#}$. Moreover, they showed
  \enuref{eq:hzero}, which establish the uniqueness of the solution.  \newline \fi
To relax this to a more tractable condition we use an orthogonal decomposition of the set of Hermitian matrices $H_N$,
in an orthogonal sum, given by the tangent space at $\vx\vx^*$ to the manifold of Hermitian rank$-1$ matrices, defined
as
\begin{align}
  T_{\vx}:=\set{\vx\vh^* + \vh\vx^*}{\vh\in\C^N}
\end{align}
and its orthogonal complement $T^{\bot}_{\vx}$, i.e. $T_{\vx}\oplus T_{\vx}^{\bot}=H_N$ (note, $H_N$ is a real vector
space).  Let $\vX^{\#}\mgeq 0$ be a feasible primal solution, then we can write 
\begin{align}
  \vX^{\#}=\vx\vx^* +\vH =\vx\vx^* + \vH_{T_{\vx}} + \vH_{T_{\vx}^\bot}\mgeq0\label{eq:xrautedecomp}
\end{align}
for some $\vH\in H_N$. Then it holds
\begin{align}
  \quad\vH_{T_{\vx}^{\bot}}\ \bot\ T_{\vx} \LRA\quad &\forall\vh\in\C^N\colon\vH_{T_{\vx}^{\bot}}\ \bot\ \vx\vh^* +\vh\vx^*\\
  \LRA \quad &\forall \vh\in\C^N\colon \tr(\vH_{T_{\vx}^{\bot}} \vx\vh^*) +\tr(\vH_{T_{\vx}^{\bot}} \vh\vx^*)=0.
\intertext{Since $\vH_{T_{\vx}^{\bot}}$ is Hermitian this is equivalent to} 
 \LRA\quad &\forall \vh\in\C^N\colon 2\Re(\tr(\vH_{T_{\vx}^{\bot}} \vx\vh^*) )=0 \label{eq:vxzero}.
\end{align}
But this holds for all $\vh\in\C^N$ and hence also for $\vh=\vH_{T_{\vx}^{\bot}}\vx$ which implies
\begin{align}
  \vH_{T_{\vx}^{\bot}}\vx=\zero,\label{eq:htxvbotzero}
\end{align}
since $\tr(\vh\vh^*)=\Re(\tr(\vh\vh^*))\geq 0$.  It holds
\begin{align}
  \vH_{T_{\vx}^{\bot}} \mgeq 0 \LRA \forall \vz\in \C^N\colon \vz^*\vH_{T_{\vx}^{\bot}}\vz\geq 0. 
\end{align}
We can decompose $\vz\in\C^N$ for any $\vx\in\C^N$ in an orthogonal sum  $\spann(\vx)\oplus\spann(\vx)^\bot$ such that there
exists $\lam\in\C$ and $\vz_1\in\spann(\vx)^{\bot}$ with $\vz= \vx + \vz_1$.
Hence, 
\begin{align}
  \vH_{T_{\vx}^{\bot}} \mgeq 0
   \LRA &\forall \lam\in\C,\vz_1\in\spann(\vx)^\bot \colon (\lam\vx+
   \vz_1)^*\vH_{T_{\vx}^{\bot}}(\lam\vx+\vz_1)\geq 0,\label{eq:uniquenesscondition}
\intertext{which is by \eqref{eq:htxvbotzero} equivalent to}
\vH_{T_{\vx}^{\bot}} \mgeq 0\LRA
& \forall \vz_1 \in \spann(\vx)^{\bot} \colon \vz_1^* \vH_{T_{\vx}^{\bot}}\vz_1\geq 0.\label{eq:z1def}
\end{align}
But since we know that $\vX^{\#}\mgeq 0$ we get for all $\vz_1\in\spann(\vx)^{\bot}$ with \eqref{eq:xrautedecomp} 
\begin{align}
  0\leq \vz_1^*\vX^{\#}\vz_1=\vz_1^*\vx\vx^*\vz_1 + \vz_1^* (\vx\vh^* + \vh\vx^*)\vz_1 + \vz_1^* \vH_{T_{\vx}^{\bot}}\vz_1
  =\vz_1^* \vH_{T_{\vx}^{\bot}}\vz_1,
\end{align}
which proofs the positive-semi-definiteness of $\vH_{T_{\vx}^{\bot}}$ by \eqref{eq:z1def}.  Since $\vX^{\#}$ is a
feasible primal solution there must exists a dual certificate $\vW^{\#}\mgeq 0$ with $\tr(\vW^{\#}\vX^{\#})=0$. If we can
show that the dual certificate $\vW$ for $\vx\vx^*$ is the dual certificate for every feasible primal solution
$\vX^{\#}$, then the only feasible solution is $\vx\vx^*$ and we are done. To do so, take a feasible
$\vX^{\#}=\vx\vx^*+\vH$. Then  
\begin{align}
  \Alin(\vH)=\zero
\intertext{which is equivalent to}
  \forall m\in[M]: \tr(\vA_m\vH)=0
\end{align}
and also $\tr(\vA_m^*\vH)=0$ since $\vH=\vH^*$. Then we can take an arbitrary $\vlam\in\C^M$ which defines $\vW$ and get
\begin{align}
  0=\sum_{m}\lam_m\tr(\vA_m\vH) + \cc{\lam_m}\tr( \vA_m^*\vH)=\tr((\sum_m \lam_m \vA_m +
    \cc{\lam_m}\vA_m^*)\vH)=\tr(\vW\vH)=\tr(\vW\vH_{T_{\vx}})+\tr(\vW\vH_{T_{\vx}^{\bot}})\label{eq:whzero}.
\end{align}
By condition  \enuref{enu:zero1} in the Lemma we have $\vW\vx=\zero$ and hence it follows
\begin{align}
  \tr(\vW\vH_{T_{\vx}^{\bot}})=0.
\end{align}
But since $\vW\vx=\zero$ by condition \enuref{enu:zero1} and $\vH_{T_{\vx}^{\bot}}\vx=\zero$ by \eqref{eq:htxvbotzero}
both matrices share a one-dimensional subspace of their nullspaces. But  $\vW\mgeq 0$ with \enuref{enu:positive1} and $\vH_{T_{\vx}^{\bot}}\mgeq 0$ by \eqref{eq:uniquenesscondition} it follows
$\range(\vW)\subseteq \kernel(\vH_{T_{\vx}^{\bot}})$, which implies $\kernel(\vH_{T_{\vx}^{\bot}})=\C^N$ since
$\rank(\vW)=N-1$ by \enuref{enu:rank1} and therefor
$\vH_{T_{\vx}^{\bot}}=\zero$. This gives the three conditions in \enuref{enu:dualcertif} of the Lemma. Hence the
uniqueness condition \eqref{eq:uniquecondorig} relaxes  to condition \enuref{enu:nullset} as
\begin{align}
  \forall \vH_{T_{\vx}}\in T_{\vx} \colon \Alin(\vH_{T_{\vx}})=\zero \RA \vH_{T_{\vx}}=\zero.
\end{align}
Then $\vX^{\#}=\vx\vx^*$ is the unique solution of \eqref{eq:feasiblepp} and
hence \eqref{eq:primalC}.  
\end{proof}

\section{Sylvester Matrix}\label{app:sylvester}

The $N\times N$ Sylvester matrix  of two vectors $\va\in\C_0^{L_1+1},\vb\in\C_0^{L_2+1}$ with $N=L_1+L_2$ play the
crucial role in our analysis and are defined for $L_1\leq L_2$ as

\begin{align}
  \notag\\[-10pt]
  \vS_{\va,\vb}:=
  \left(\begin{array}{cccc|cccccr}
    \tzm{u1a}b_{0} & 0         & \dots &  0\tzm{u1b}& \tzm{u2a}a_{0}& 0        & \dots & 0 &\dots &0\tzm{u2b} \tzmr{r1a} \\
    b_{1}          & b_{0}     & \dots &  0          & a_{1}         & a_{0}    & \dots & 0  & \dots &0\\
    \vdots         &           & \ddots&       \vdots     & \vdots        & \vdots   & \ddots & \vdots & &\vdots\tzmr{r1b}\\
    \vdots         & \vdots    &       & \vdots     & a_{L_1}      &  a_{L_1-1}&\dots &  a_{0} & \dots & 0\tzmr{r2a}\\
    b_{L_2}        & b_{L_2-1} & \dots & b_{L_2-(L_1-1)}  & 0      & a_{L_1}   &  \dots& a_1 &\dots & 0 \\
    0              & b_{L_2}   & \dots & b_{L_2-(L_1-2)}  & \vdots  &  & \ddots   & & &\vdots\\
    \vdots         &           & \ddots&   \vdots      & \vdots        &\vdots    &  & &\ddots&\vdots\\
    0              & 0         & \dots &  b_{L_2}    &  0       & 0        & \dots & 0 & \dots& a_{L_1} \tzmr{r2b}
  \end{array}\right)\label{eq:sylvestermatrix}\\[-15pt]\notag
  \begin{tikzpicture}[overlay, remember picture,decoration={brace,amplitude=5pt}]
      \draw[decorate,thick] (u1a.north) -- (u1b.north)
      node [midway,above=5pt] {${\scriptsize{L_1}}$};
       \draw[decorate,thick] (u2a.north) -- (u2b.north)
       node [midway,above=5pt] {${\scriptsize{L_2}}$};
       \end{tikzpicture}
       \begin{tikzpicture}[overlay, remember picture,decoration={brace,amplitude=5pt}]
       \draw[decorate,thick] (r1a.east) -- (r1b.south east) 
       node [midway,right=5pt] {${\scriptsize{L_1}}$};
        \draw[decorate,thick] (r2a.east) -- (r2b.east)
        node [midway,right=5pt] {${\scriptsize{L_2}}$};
  \end{tikzpicture}
\end{align}
where the first $L_1$ columns are down-shifts of the vector $\vb$ and the last $L_2$ columns are down-shifts of the
vector $\va$, see for example \cite[Sec.VII]{JH16} or \cite[Def.7.2]{GCL92},\cite[(1.84)]{Bar83} (here they define the
transpose version and for polynomials $a(z):=\sum_{k}a_{L_1-k}z^k=z^{L_1}A(z)$ and
$b(z):=\sum_{k}b_{L_2-k}z^k=z^{L_2}B(z)$ with degree $L_1$ respectively $L_2$, which has no effect on the resultant
(determinant) or rank).
The \emph{resultant} of the polynomials $a$ and $b$ is the determinant of the Sylvester matrix
$\vS_{a,b}=\vS_{\va,\vb}$. \namen{Sylvester} showed that the two polynomials have a common factor (non-trivial, i.e.,
not a constant) if and only if $\det(\vS_{a,b})\not=0$, which is equivalent of having full rank, i.e.,
$\rank(\vS_{a,b})=N:=L_1+L_2$. This can be generalized to the degree of the greatest common factor (GCD), see
\cite[Thm.1.8]{Bar83}.
\begin{theorem} \label{thm:sylvesterrank}
  Let $a,b\in\C[z]$ with degree $L_1$ and $L_2$ generating the  Sylvester matrix $\vS_{a,b}$, then 
  the greatest common factor of $a,b$ has degree
  \begin{align}
    D=L_1 + L_2 - \rank(\vS_{a,b}).
  \end{align}
\end{theorem}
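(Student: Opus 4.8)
The plan is to read the Sylvester matrix $\vS_{a,b}$ not as an array but as the matrix of the linear \emph{Sylvester map}
\[
  S\colon (u,v)\longmapsto bu+av,
\]
where $u$ ranges over polynomials of degree $<L_1$ and $v$ over polynomials of degree $<L_2$, the output being a polynomial of degree $<L_1+L_2=N$. The column structure in \eqref{eq:sylvestermatrix} is precisely this: the first $L_1$ columns are the coefficient vectors of $z^{-j}b$ for $j=0,\dots,L_1-1$ and the last $L_2$ columns those of $z^{-j}a$, so that applying $\vS_{a,b}$ to the stacked coefficient vector of $(u,v)\in\C^{L_1}\oplus\C^{L_2}=\C^N$ returns the coefficient vector of $bu+av$. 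With this identification the theorem becomes the rank--nullity theorem applied to $S$, since $\rank(\vS_{a,b})=\rank(S)$ and $\rank(S)+\dim\ker(S)=N$. It therefore suffices to show $\dim\ker(S)=D$, the degree of the greatest common divisor.

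First I would introduce the coprime factorization. Since $\C$ is a field, $\C[z]$ is a unique factorization (indeed Euclidean) domain, so I may write $a=g\tilde a$ and $b=g\tilde b$, where $g=\gcd(a,b)$ has degree $D$ and $\gcd(\tilde a,\tilde b)=1$ with $\deg\tilde a=L_1-D$ and $\deg\tilde b=L_2-D$. A pair $(u,v)$ lies in $\ker(S)$ exactly when $bu=-av$; cancelling the common factor $g$ in the integral domain $\C[z]$ reduces this to $\tilde b\,u=-\tilde a\,v$.

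The key step is then a divisibility argument. From $\tilde a\mid\tilde b\,u$ together with $\gcd(\tilde a,\tilde b)=1$ I conclude $\tilde a\mid u$, say $u=\tilde a\,w$; substituting back gives $v=-\tilde b\,w$, and conversely every such pair solves $\tilde b u=-\tilde a v$. The degree constraints $\deg u<L_1$ and $\deg v<L_2$ both translate, because $\deg\tilde a=L_1-D$ and $\deg\tilde b=L_2-D$, into the single condition $\deg w<D$. Hence $w\mapsto(\tilde a w,-\tilde b w)$ is a linear isomorphism from the $D$-dimensional space of polynomials of degree $<D$ onto $\ker(S)$, so $\dim\ker(S)=D$ and the claimed identity $D=L_1+L_2-\rank(\vS_{a,b})$ follows.

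The bookkeeping on degrees is the only place that needs care: I must check that the two inequalities coming from $u$ and from $v$ genuinely collapse to the same bound on $\deg w$ (they do, precisely because $\deg\tilde a+\deg\tilde b=N-2D$ splits evenly across the two factors), and that the column interpretation matches the shift convention of \eqref{eq:sylvestermatrix}, including the remark that replacing $A,B$ by $z^{L_1}A,z^{L_2}B$ leaves the rank unchanged, so the argument is insensitive to whether one works with $z$ or $z^{-1}$. The special case $D=0$ then recovers the classical statement that $a,b$ are coprime if and only if $\det(\vS_{a,b})\neq0$.
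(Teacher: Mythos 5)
Your proof is correct. Note that the paper itself does not prove \thmref{thm:sylvesterrank} at all: it is quoted from \cite[Thm.1.8]{Bar83}, so there is no in-paper argument to compare against. What you supply is the standard self-contained proof: identify $\vS_{a,b}$ with the linear map $(u,v)\mapsto bu+av$ on $\C^{L_1}\oplus\C^{L_2}$ (which is exactly how the paper itself uses the matrix in \eqref{eq:sylvesternullspace} and \appref{app:dualcerti}), apply rank--nullity, and compute $\dim\ker$ by cancelling $g=\gcd(a,b)$ and invoking Euclid's lemma for the coprime cofactors to get the parametrization $(u,v)=(\tilde a w,-\tilde b w)$ with $\deg w<D$. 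The degree bookkeeping is right — both constraints reduce to $\deg w<D$ because $\deg\tilde a=L_1-D$ and $\deg\tilde b=L_2-D$ — though this step silently uses the hypothesis that $a$ and $b$ have \emph{exact} degrees $L_1$ and $L_2$ (as guaranteed in the paper's application by $\vx_i\in\C^{L_i}_{0,0}$); the formula would fail otherwise, so it is worth stating that assumption explicitly. One cosmetic remark: the paper's sentence preceding the theorem asserts coprimality iff $\det(\vS_{a,b})\not=0$ but garbles the phrasing (``have a common factor \dots if and only if $\det\not=0$''); your version of the classical special case $D=0$ is the correct one.
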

Multiplying the polynomials $\uX_1$ and $\uX_2$ in \thmref{thm:4correlation} by $z^{-1}$ is equivalent to adding a zero
to the coefficient vectors $\vx_1$ and $\vx_2$, hence we set
\begin{align}
  \va=-\vx_1^0:=\begin{pmatrix}-\vxone\\ 0 \end{pmatrix} \quad,\quad 
  \vb=\vx_2^0:=\begin{pmatrix}\vxtwo\\ 0 \end{pmatrix}.
\end{align}
Then the nullspace of $\vS:=\vS_{a,b}=\vS_{-\vx_1^0,\vx_2^0}$, which dimension is given by \thmref{thm:sylvesterrank} as
$D$, determines the set of convolution equivalences, since we have (see also \appref{app:dualcerti})
\begin{align}
  \vS\begin{pmatrix}\vtx_1\\ \vtx_2\end{pmatrix}= \begin{pmatrix}\vx_2*\vtx_1 - \vx_1*\vtx_2 \\ 0\end{pmatrix}
  =\zero_N\label{eq:sylvesternullspace}
\end{align}
with vectors $\vtxone\in\C^{L_1},\vtxtwo\in\C^{L_2}$. Hence, if  the polynomials $\uX_1$ and $\uX_2$ do not have a
common factor ($a$ and $b$ have common factor $z$ of degree $D=1$), then by \thmref{thm:sylvesterrank} the rank of $\vS$
is $N-1$, i.e., there exists only one pair $(\vtxone,\vtxtwo)\in\C^{\Lone}\times \C^{\Ltwo}$ up to a global scaling, for
which their convolutions are equal, i.e., 
\begin{align}
  \vxone*\vtxtwo=\vxtwo*\vtxone \quad\LRA \quad\vtxone=\lam\vxone,  \vtxtwo=\lam\vxtwo \ \text{ for some } \ \lam\in\C.
\end{align}
Usually this result is written in the polynomial or $z-$domain as 
\begin{align}
  \vxone *\vtxtwo = \vxtwo*\vtxone \quad\LRA\quad \uX_1 \tuX_2=\uX_2\tuX_1.\label{eq:sylvesterpoly}
\end{align}
where $\tuX_1$ and $\tuX_2$ are polynomials of degree $\leq L_1$ respectively $\leq L_2$. Hence, if $\uX_1$ and $\uX_2$
are co-prime the only possible polynomials are $\tuX_1=\lam\uX_1$ and $\tuX_2=\lam\uX_2$, up to a unit $\lam$ (trivial
polynomial), which becomes the scalar factor for $\vx$.  Hence the nullspace of $\vS$ is one-dimensional and therefor
$\rank(\vS)=N-1$.
\subsection{Dual Certificate Construction}\label{app:dualcerti}

To show that $\vS^*\vS$ is a dual certificate, we have to define $\vlam\in\C^{4N-4}$ such that by
\eqref{eq:dualcertificate} we get
\begin{align}
  \vS^*\vS = \sum_{m=0}^{4N-5} \lam_m \vA_m + \lam_m \vA_m^* = \sum_{i,j=1}^2
  \sum_{k=0}^{L_i+L_j-2}  \lam_{i,j,k}\vA_{i,j,k} +   \cc{\lam_{i,j,k}}\vA_{i,j,k}^T,
\end{align}
where we split again $\vlam^T=(\vlam_{1,1}^T, \vlam_{2,2}^T,\vlam_{1,2}^T, \vlam_{2,1}^T)$ in four blocks corresponding
to the $\vA_{i,j,k}$ in \eqref{eq:a11}-\eqref{eq:a21}. To derive the $\vlam_{i,j}$ we need to write $\vS^*\vS$ in block
structure.  Let us define the lower banded Toeplitz matrix generated by $\vx_i$ as
\begin{align}
  \vC_{\vx_i} &= \sum_{m=0}^{L_i-1} x_{i,m} \vT^m_{N-1},
\end{align} 
where $\vT^m_{N-1}$ is the $m$th  $(N-1)\times (N-1)$ shift-matrix (elementary Toeplitz matrices) defined in
\eqref{eq:timeshiftmatrix}.
To apply this on $\vx_j\in\C^{L_j}$ we have to embed $\vx_j$ in $N-1$ dimensions with the $(N-1)\times L_j$ embedding matrix as
defined in \eqref{eq:timeshiftmatrix} by
\begin{align}
  \vC_{\vx_i}^{_j} &= \sum_{m=0}^{L_i-1} x_{i,m} \vT^m_{N-1} \Proj_{N-1,L_j}.
\end{align} 
Here, the upper index $j$ refers to the embedding dimension $L_j$.  We then obtain the matrix notation for the linear
convolution \eqref{eq:convtime} between $\vx_i\in\C^{L_i}$ and $\vx_j\in\C^{L_{j}}$ as
\begin{align}
  \vC_{\vx_i}^{_j} \vx_j&= \vx_i*\vx_j.
\end{align}
Hence, the Sylvester matrix $\vS=\vS_{-\vx_1^0,\vx_2^0}$ is the concatenation of the two lower banded  matrices
$\vC_{\vx_2^0}^{_1}$ and $\vC_{-\vx_1^0}^{_2}$ and we
get for any $\vtx_1\in\C^{L_1}$ and $\vtx_2\in\C^{L_2}$ the convolution products embedded in $N=L_1+L_2$ dimensions as
\begin{align}
  \vS_{-\vx_1^0,\vx_2^0} \begin{pmatrix}\vtx_1\\ \vtx_2\end{pmatrix} = \begin{pmatrix}\vC_{\vx_2^0}^{_1} & \vC_{-\vx_1^0}^{_2}
  \end{pmatrix}\begin{pmatrix}\vtx_1\\\vtx_2\end{pmatrix}= 
  \begin{pmatrix}\vC_{\vx_2}^{_1} & -\vC_{\vx_1}^{_2}\\ \zero^T_{L_1} & \zero^T_{L_2}\end{pmatrix}\begin{pmatrix}\vtx_1\\
    \vtx_2\end{pmatrix}  
      = \begin{pmatrix} \vx_2*\vtx_1 - \vx_1*\vtx_2 \\ 0 \end{pmatrix}.
      \label{eq:sylvesterbanded}
\end{align}
If we consider the product of the adjoint Sylvester matrix with itself we get %
\begin{align}
  \vS_{-\vx_1^0,\vx_2^0}^* \vS_{-\vx_1^0,\vx_2^0} = 
  \begin{pmatrix}   \vC_{\vx_2^0}^{_1 *} \\ \vC_{-\vx_1^0}^{_2 *} \end{pmatrix}
  \begin{pmatrix}   \vC_{\vx_2^0}^{_1} & \vC_{-\vx_1^0}^{_2} \end{pmatrix}
  =
  \begin{pmatrix}  
    \vC_{\vx_2^0}^{_1 *}\vC_{\vx_2^0}^{_1} & -\vC_{\vx_2^0}^{_1 *}\vC_{\vx_1^0}^{_2}\\
    -\vC_{\vx_1^0}^{_2 *}\vC_{\vx_2^0}^{_1} & \vC_{\vx_1^0}^{_2 *}\vC_{\vx_1^0}^{_2}\\
  \end{pmatrix}.
  \label{eq:sylvesterrangestar}
\end{align}
Since we have for $i,j\in\{1,2\}$ 
\begin{align}
  \vC_{\vx_i^0}^{_j*} = \sum_{m=0}^{L_i-1} \cc{x_{i,m}}  \Proj_{L_j,N}\vT^{-m}_N
\end{align}
we get for each of the four blocks in \eqref{eq:sylvesterrangestar} denoted by $i,i',j,j'$ as
\begin{align}
  \vC_{\vx_i^0}^{_j*} \vC_{\vx_{i'}^0}^{_{j'}}
   = \sum_{m=0}^{L_i-1} \sum_{l=0}^{L_{i'}-1}\cc{x_{i,m}} x_{i',l}\Proj_{L_j,N} \vT^{-m}_N  \vT^{l}_N\Proj_{N,L_{j'}} 
   = \sum_{m=0}^{L_i-1} \sum_{l=0}^{L_{i'}-1}\cc{x_{i,m}} x_{i',l}\Proj_{L_{j},N} \vT^{l-m}_N \Proj_{N,L_{j'}}.
\end{align}
Let us emphasize that $l,m$ are limited by $\pm L_1$ resp. $\pm L_2$, and since we consider the $L_1$ resp. $L_2$
embeddings, the zeros on the $l-m$th diagonal in $\vT^{l-m}_N$ can be ignored. By substituting $k=l-m$ we get
\begin{align}
  \vC_{\vx_i^0}^{_j*} \vC_{\vx_{i'}^0}^{_{j'}} 
    & =\sum_{k=-L_i+1}^{L_{i'}-1} \sum_{m=0}^{L_i-1} \cc{x_{i,m}} x_{i',k+m} \Proj_{L_j,N}\vT^{k}_N\Proj_{N,L_{j'}}
    =\sum_{k=0}^{L_i+L_{i'}-2} \! \! \!\underbrace{ \sum_{m=0}^{L_i-1} \cc{x_{i,m}}
    x_{i',k-L_i+1+m}}_{=\cc{(\vx_i*\cc{\vx_{i'}^-})_{L_i\!+\!L_{i'}\!-\!2\!-\!k}}=(\vx_{i'}*\cc{\vx_i^-})_k} 
    \!\! \Proj_{N,L_j}^T\vT^{k-L_i+1}_N\Proj_{N,L_{j'}}
\end{align}
where the inner sum is the correlation between $\vx_{i'}$ and $\vx_{i}$ at index $k\in [L_i+L_{i^{'}}-1]$.
Hence we get for the autocorrelations $i=i'\in\{1,2\}$ the diagonal blocks in \eqref{eq:sylvesterrangestar}
\begin{align}
  \vC_{\vx_2^0}^{_1*} \vC_{\vx_{2}^0}^{_{1}} &= \sum_{k=0}^{2L_2-2} (\vx_2*\cc{\vx_2^-})_k \Proj_{N,L_1}^T \vT_N^{k-L_2+1}
  \Proj_{N,L_1}= \sum_{k} (\va_{2,2})_{2L_2-2-k} \vT^{(k,2)}_{L_1,L_1}= \sum_k (\va^-_{2,2})_k \vT_{L_1,L_1}^{(k,2)},\\
  \vC_{\vx_1^0}^{_2*} \vC_{\vx_{1}^0}^{_{2}} &= \sum_{k=0}^{2L_1-2} (\vx_1*\cc{\vx_1^-})_k \Proj_{N,L_2}^T \vT_N^{k-L_1+1}
  \Proj_{N,L_2}= \sum_{k} (\va_{1,1})_{2L_1-2-k} \vT^{(k,1)}_{L_2,L_2}=\sum_k (\va_{1,1}^-)_k \vT_{L_2,L_2}^{(k,1)},
\end{align}
where $\vT_{L_i,L_i}^{(k,j)}$ was defined in \eqref{eq:LiLj}, but with the difference that $i\not=j$. Since the
dimensions of the $L_j \times L_{j}$ block matrices in \eqref{eq:sylvesterrangestar} are not fitting with the
autocorrelations $\vx_i * \cc{\vx_{i}^-}$ on the diagonal we have to cut respectively zero-pad the $\vlam_{i,i}$
correspondingly.  Let us assume w.l.o.g. that $L_1\leq L_2$. Then we set
\begin{align}
  \begin{split}
    \vlam_{1,1}&:=\frac{1}{2}\{(\va_{2,2}^-)_k\}_{k=L_2-L_1}^{L_2+L_1-2} \in\C^{2L_1-1}, \\
    \vlam_{2,2}&:=\frac{1}{2}\begin{pmatrix} \zero_{L_2-L_1}\\ \va_{1,1}^-\\ \zero_{2L_2-1- L_1}\end{pmatrix}
    \in\C^{2L_2-1},\\
\end{split}\label{eq:appvlam}
\end{align}
which gives by the conjugate-symmetry of the autocorrelations
\begin{align}
  \sum_{i=1}^2 \sum_{k=0}^{2L_i-2}\lam_{i,i,k}\vA_{i,i,k}  + \cc{\lam_{i,i,k}}\vA_{i,i,k}^T  =
  2\sum_{i=1}^2\sum_{k} \lam_{i,i,k}\vA_{i,i,k},\label{eq:lam11}
\end{align}
where the transpose of $\vA_{i,i,k}$ is equivalent to a time-reversal of $\vlam_{i,i}$, i.e.
$\lam_{i,i,k}^-=\lam_{i,i,2L_i-2-k}$.  For the anti-diagonal in \eqref{eq:sylvesterrangestar} we have
\begin{align}
  -\vC_{\vx_1^0}^{_2*} \vC_{\vx_{2}^0}^{_{1}} &= -\sum_{k=0}^{L_1+L_2-2} (\vx_2*\cc{\vx_1^-})_k \Proj_{N,L_2}^T
  \vT_N^{k-L_1+1}    \Proj_{N,L_1}= -\sum_{k} (\va_{2,1}^-)_k \vT^{(k)}_{L_2,L_1},\\
  -\vC_{\vx_2^0}^{_1*} \vC_{\vx_{1}^0}^{_{2}} &= -\sum_{k=0}^{L_1+L_2-2} (\vx_1*\cc{\vx_2^-})_k \Proj_{N,L_1}^T
  \vT_N^{k-L_2+1}
    \Proj_{N,L_2}= -\sum_{k} (\va_{1,2}^-)_k \vT^{(k)}_{L_1,L_2},
\end{align}
denoting the time-reversal of the cross-correlations. Hence we set similar
\begin{align}
  \vlam_{1,2}&:= -\frac{1}{2} \va_{2,1}^-\in\C^{N-1}\\
  \vlam_{2,1}&:= -\frac{1}{2}\va_{1,2}^-\in\C^{N-1}.
\end{align}
Since the off-diagonal matrices satisfy $\vA_{2,1,k}^*=\vA_{2,1,k}^T= \vA_{1,2,L_1+L_2-2-k}$ for $k\in[L_1+L_2-1]$ we
have again the transpose is equivalent to a time reversal of $\vlam_{2,1}$ and since $\cc{\vlam_{2,1}^-}= \vlam_{1,2}$
we have 
\begin{align}
  \begin{split}
  \sum_k \lam_{1,2,k}\vA_{1,2,k}  + \cc{\lam_{1,2,k}}\vA_{1,2,k}^T  &=-\frac{1}{2}\sum_k (\va_{2,1}^-)_k \vA_{1,2,k}
  +\cc{(\va_{2,1}^-)_k} \vA_{1,2,k}^T \\
  &= -\frac{1}{2}\sum_k (\va_{2,1}^-)_k\vA_{1,2,k}   +(\va_{1,2})_{k} \vA_{2,1,N-2-k}\\
  &= -\frac{1}{2}\sum_k (\va_{2,1}^-)_k\vA_{1,2,k}   +(\va_{1,2}^-)_{k} \vA_{2,1,k}
\end{split}\label{eq:lam12}\\
  \begin{split}
    \sum_k \lam_{2,1,k}\vA_{2,1,k}  + \cc{\lam_{2,1,k}}\vA_{2,1,k}^T  
    &=-\frac{1}{2}\sum_k (\va_{1,2}^-)_k \vA_{2,1,k} +\cc{(\va_{1,2}^-)_k}\vA_{2,1,k}^T   \\
    & =-\frac{1}{2}\sum_k (\va_{1,2}^-)_k\vA_{2,1,k}    +(\va_{2,1})_k\vA_{1,2,N-2-k}    \\
    & =-\frac{1}{2}\sum_k (\va_{1,2}^-)_k\vA_{2,1,k}    +(\va_{2,1}^-)_k\vA_{1,2,k}. 
  \end{split}\label{eq:lam21}
\end{align}
Hence, adding \eqref{eq:lam11},\eqref{eq:lam12} and \eqref{eq:lam21} yields $\vW=\vS^*\vS$.

\newpage
\section*{References}
\printbibliography

\end{document}